\newcommand\states[1]{\Sigma_{#1}}
\newcommand\Trans[4]{{#2} \stackrel{#3}{\longrightarrow_{#1}}{#4}}
\newcommand\tmlcga{\textnormal{TML-CGA}\xspace}
\newcommand\noreccga{\textnormal{NORec-CGA}\xspace}
\newcommand\ronoreccga{\textnormal{NORec2-CGA}\xspace}
\newcommand\tml{\textnormal{TML}\xspace}
\newcommand\norec{\textnormal{NORec}\xspace}
\newcommand\ronorec{\textnormal{NORec2}\xspace}
\newcommand{\complete}[1]{[#1]}
\newcommand{\imp}{\Rightarrow}
\newcommand{\undef}{\bot}
\newcommand{\pprec}{\prec\!\!\!\prec}
\def\lseq{\langle}
\def\rseq{\rangle}
\def\eseq{\varepsilon}
\newcommand{\OMIT}[1]{}
\newcommand{\bftmbegin}{\texttt{\bf TXBegin}\xspace}
\newcommand{\bftmwrite}{\texttt{\bf TXWrite}\xspace}
\newcommand{\bftmread}{\texttt{\bf TXRead}\xspace}
\newcommand{\bftmcommit}{\texttt{\bf TXCommit}\xspace}
\newcommand{\tmbegin}{\textnormal{\texttt{TXBegin}}\xspace}
\newcommand{\tmcommit}{\textnormal{\texttt{TXCommit}}}
\newcommand{\tmread}{\textnormal{\texttt{TXRead}}\xspace}
\newcommand{\tmabort}{\textnormal{\texttt{TXAbort}}}
\newcommand{\tmwrite}{\textnormal{\texttt{TXWrite}}\xspace}
\newcommand{\abegin}{\textnormal{\texttt{ATXBegin}}\xspace}
\newcommand{\acommit}{\textnormal{\texttt{ATXCommit}}\xspace}
\newcommand{\aread}{\textnormal{\texttt{ATXRead}}\xspace}
\newcommand{\awrite}{\textnormal{\texttt{ATXWrite}}\xspace}
\newcommand{\Validate}{\textnormal{\texttt{Validate}}}
\newcommand{\dom}{\textit{dom}}
\newcommand{\sdef}{\mathrel{\widehat{=}}}
\newcommand{\sif}{\textbf{if}}
\newcommand{\sthen}{\textbf{then}}
\newcommand{\selse}{\textbf{else}}
\newcommand{\reffig}[1]{Figure~\ref{#1}}
\newcommand{\refthm}[1]{Theorem~\ref{#1}}
\newcommand{\reflem}[1]{Lem\-ma~\ref{#1}}
\newcommand{\refsec}[1]{Section~\ref{#1}}
\newcommand{\refex}[1]{Example~\ref{#1}}
\newcommand{\refdef}[1]{Definition~\ref{#1}}
\newcommand{\reflst}[1]{Listing~\ref{#1}}
\newcommand{\beginInv}[1]{\ensuremath{inv_{#1}(\tmbegin)}}
\newcommand{\beginResp}[1]{\ensuremath{resp_{#1}(\tmbegin)}}
\newcommand{\readInv}[2]{\ensuremath{inv_{#1}(\tmread(#2))}}
\newcommand{\readResp}[2]{\ensuremath{resp_{#1}(\tmread(#2))}}
\newcommand{\writeInv}[2]{\ensuremath{inv_{#1}(\tmwrite(#2))}}
\newcommand{\writeResp}[1]{\ensuremath{resp_{#1}(\tmwrite)}}
\newcommand{\commitInv}[1]{\ensuremath{inv_{#1}(\tmcommit)}}
\newcommand{\commitResp}[1]{\ensuremath{resp_{#1}(\tmcommit)}}
\newcommand{\abortResp}[1]{\ensuremath{resp_{#1}(\tmabort)}}
\newcommand{\doCommitReadOnlyKW}[1]{\ensuremath{\texttt{DoCommitRO}_{#1}}}
\newcommand{\doCommitReadOnly}[2]{\ensuremath{\doCommitReadOnlyKW{#1}(#2)}}
\newcommand{\doCommitWriter}[1]{\ensuremath{\texttt{DoCommitW}_{#1}}}
\newcommand{\doRead}[2]{\ensuremath{\texttt{DoRead}_{#1}(#2)}}
\newcommand{\doWrite}[2]{\ensuremath{\texttt{DoWrite}_{#1}(#2)}}
\newcommand{\vidx}[2]{\ensuremath{\mathit{validIdx}_{#1}(#2)}}
\newcommand{\action}[3]{\ensuremath{
\begin{array}[t]{@{}l@{~}l@{}}
\multicolumn{2}{@{}l@{}}{#1}\\
\textsf{Pre:}&#2\\
\textsf{Eff:}&#3
\end{array}
}}
\newcommand{\statusText}[1]{\text{#1}}
\newcommand{\pcNotStarted}{\statusText{notStarted}}
\newcommand{\pcBeginPending}{\statusText{beginPending}}
\newcommand{\pcReady}{\statusText{ready}}
\newcommand{\pcDoWrite}{\statusText{doWrite}}
\newcommand{\pcWriteResp}{\statusText{writeResp}}
\newcommand{\pcDoRead}{\statusText{doRead}}
\newcommand{\pcReadResp}{\statusText{readResp}}
\newcommand{\pcDoCommit}{\statusText{doCommit}}
\newcommand{\pcCommitResp}{\statusText{commitResp}}
\newcommand{\pcCommitted}{\statusText{committed}}
\newcommand{\pcAborted}{\statusText{aborted}}
\newcommand{\ret}[1]{\overline{#1}}
\newcounter{thm}
\newtheorem{theorem}[thm]{Theorem}
\newtheorem{lemma}[thm]{Lemma}
\newtheorem{definition}[thm]{Definition}
\newtheorem{example}[thm]{Example}
\algnewcommand\algorithmicawait{\textbf{await}}
\algnewcommand\Await[1]{\State\algorithmicawait\ #1}
\algrenewcommand\algorithmicindent{1.0em}
\newlength\myindent
\begin{document}





\title{Reducing Opacity to Linearizability: \\ A Sound and Complete
  Method} 

\author{Alasdair Armstrong\qquad\qquad Brijesh Dongol \\
{\small Brunel University London, UK}
\and Simon Doherty \\{\small University of Sheffield, UK}}

\maketitle

\begin{abstract}
  Transactional memory is a mechanism that manages thread
  synchronisation on behalf of a programmer so that blocks of code
  execute with an illusion of atomicity. The main safety criterion for
  transactional memory is opacity, which defines conditions for
  serialising concurrent transactions.

  Proving opacity is complicated because it allows concurrent transactions
  to observe distinct memory states, while TM implementations are typically
  based on one single shared store. 
  This paper presents a sound and complete method, based on
  coarse-grained abstraction, for reducing proofs of opacity to the
  relatively simpler correctness condition: linearizability. We use
  our methods to verify TML and NORec from the literature and show our
  techniques extend to relaxed memory models by showing that both are
  opaque under TSO without requiring additional fences. Our methods
  also elucidate TM designs at higher level of abstraction; as an
  application, we develop a variation of NORec with fast-path reads
  transactions. All our proofs have been mechanised, either in the
  Isabelle theorem prover or the PAT model checker.
\end{abstract}



\section{Introduction}

Transactional Memory (TM) provides programmers with an easy-to-use
synchronisation mechanism for concurrent access to shared data. The
basic mechanism is a programming construct that allows one to specify
blocks of code as {\em transactions}, with properties akin to database
transactions (atomicity, consistency and isolation)
\cite{DBLP:series/synthesis/2010Harris}. Like
database transactions, a software transaction might encounter
interference and abort.  Transactions must be invisible to all other
transactions until they successfully commit.

Over the last few years, there has been an explosion of research on
TM, leading to TM libraries implemented in many programming language
libraries (Java, Clojure, C++11), compiler support for TM (G++ 4.7)
and hardware support (Intel's Haswell processor). This widespread
adoption together with their underlying complexity means that formal
verification of TM is an important problem.

The main safety condition for TM is \emph{opacity}
\cite{GuerraouiK08,2010Guerraoui}, which provides conditions for
serialising (concurrent) transactions into a sequential order and
describes the meaning of this sequential order. Over the years,
several methods for verifying opacity have been developed
\cite{Lesani14,LesaniP14,Lesani2012,GuerraouiHS10,GHS08,ASS16,DDSTW15}. A
difficulty in opacity verification is that it must deal with
sequences of memory snapshots that reflect the history of all committed
transactions. For example, a read-only transaction is allowed to serialise against
an earlier snapshot that is different from the current memory. 

This paper develops a method for simplifying proofs of opacity by
reducing it to the well-known correctness condition
\emph{linearizability} \cite{HeWi90}. Unlike opacity, linearizability
proofs need only concern themselves with a single ``current'' value
of the abstract object. This method is
\emph{sound} (any verification using linearizability implies opacity
of the original algorithm) and we show that it is also \emph{complete}
(for any opaque implementation, it is possible to prove opacity
by proving linearizability with respect to an appropriate abstract object).
In addition to reducing the complexity of the
problem, our approach makes it possible to leverage the rich literature on
linearizability verification \cite{DongolD15} to verify
opacity.

Our method involves development of a coarse-grained abstraction of the
TM implementation at hand, where the fine-grained operations of an
implementation are abstracted by coarse-grained specification with
atomic operations. The first step is to show that this coarse-grained
specification is opaque, and the second to show that the
implementation does indeed linearize to the coarse-grained
abstraction. We then leverage a soundness result to conclude that the
original implementation is opaque.

\begin{figure}[t]
  \centering
    \scalebox{0.6}{\input{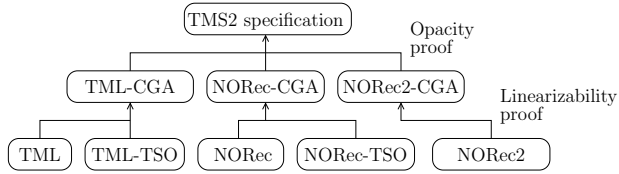}}
  \caption{Overview of proofs}
  \label{fig:overview}
\end{figure}
We use a Transactional Mutex Lock (TML) \cite{DalessandroDSSS10} as a
running example to introduce the different concepts, then apply our
method to this algorithm, i.e., verify it correct by proving
linearizability against a coarse-grained abstraction. Then to show the
applicability of our approach, we verify the more complex NORec
algorithm~\cite{DBLP:conf/ppopp/DalessandroSS10}. In addition, we show
that our method scales to relaxed-memory models; we prove opacity of
both TML and NORec under TSO memory via a proof of
linearizability. 

Proof via coarse-grained abstraction has advantages, e.g., it can be
used to (more easily) elucidate underlying design principles behind
each algorithm. In particular, we prove distinctness between TML and
NORec by generating counter-example histories at the level of their
coarse-grained abstractions. Such high-level abstractions of each
algorithm elucidate design alternatives. In particular, we develop
\ronorec, a variation of NORec, with read-only transactions that
performs fast-path validation when reading previously loaded
values. We show that this variation is opaque, and that its design is
distinct from the original NORec algorithm (at the coarse-grained
level). We develop a fine-grained version of this variation and show
that it is a linearizable with respect to this coarse-grained
abstraction. We implement this new algorithm in C and evaluate its
performance using the STAMP benchmarking suite \cite{stamp}; our
experimental results are discussed in \refsec{sec:tm-design}.

An overview of the proofs we have performed is given in
\reffig{fig:overview}, where \tmlcga is the coarse-grained abstraction
corresponding to TML. Each proof of opacity has been mechanised in the
interactive theorem prover Isabelle \cite{NipkowPW02}. A benefit of
these proofs is that they have been proved correct for models of
arbitrary size. The linearizability proofs could also have been fully
verified in Isabelle. However, we chose to exploit one of the many
methods for model checking linearizability from the literature. In
particular, we use the PAT model checker \cite{SunLDP09}, which
provides a process-algebraic encoding of the systems under
consideration and allows refinement to be checked automatically. A
benefit of the proofs in PAT is that no invariants or simulation
relations need to be defined. The method is also able to cope with
TSO-encodings of both the TML and NORec algorithms. A surprising
result in these proofs is that each implementation in
\reffig{fig:overview} is \emph{equivalent} to its coarse-grained
counterpart, i.e., for example we can show that both TML refines
TML-CGA \emph{and} TML-CGA refines TML.

This paper is structured as follows. Background material including the
TML algorithm and a formal definition of opacity is given in
\refsec{sec:background}. Our method of reducing opacity to
linearizability, including the soundness and completeness proofs are
given in \refsec{sec:reducing-opacity}, which we apply to our TML
algorithm in \refsec{sec:proving-opacity-via} (including proofs of
opacity and linearizability). The NORec algorithm and its proof is
given in \refsec{sec:norec-algorithm}. We give extensions to include
relaxed memory in \refsec{sec:relaxed-memory} and in
\refsec{sec:tm-design}, we develop \ronorec, demonstrating how
reduction via coarse-grained abstraction contributes to TM design.

\section{Opacity}
\label{sec:background}

This section formalises \emph{opacity} as defined by Guerraoui and
Kapalka \cite{2010Guerraoui}. Our formalisation mainly follows Attiya
\emph{et al.} \cite{AGHR13}, but we explicitly include the
prefix-closure constraint to ensure consistency with other accepted
definitions
\cite{LLM12,2010Guerraoui,DBLP:series/synthesis/2010Harris}. We
introduce TM using the TML algorithm in
\refsec{sec:exampl-trans-mutex}, formalise histories in
\refsec{sec:histories} and opacity in
\refsec{sec:opacity}. 

\subsection{Example: Transactional Mutex Lock}

\label{sec:exampl-trans-mutex}


\newcommand{\vloc}[1]{\mathit{loc}_{#1}}
\newcommand{\vglb}{\mathit{glb}}
\newcommand{\vmem}{\mathit{mem}}

\begin{algorithm}[t]
  \caption{The Transactional Mutex Lock (TML) algorithm}
  \label{alg:tml}

  \begin{varwidth}[t]{0.45\textwidth}
  \begin{algorithmic}[1]
    \Procedure{Init}{}
    \State $\vglb$ $\gets$ 0
    \EndProcedure
    \algstore{tml-ctr}
  \end{algorithmic}
    \begin{algorithmic}[1]
      \algrestore{tml-ctr}
      \Procedure{$\tmbegin_t$}{}
      \State {\bf do}\ $\vloc{t}$ $\gets$ $\vglb$
      \State {\bf until} {$\mathit{even}(\vloc{t})$}
      \EndProcedure\smallskip

      \Procedure{$\tmcommit_t$}{}
      \If{$\mathit{odd}(\vloc{t})$}
      \State $\vglb$ $\gets$ $\vloc{t} + 1$ \label{tml:8}
      \EndIf
      \EndProcedure
      \algstore{tml-ctr}
    \end{algorithmic}
  \end{varwidth}
  \hfill
  \begin{varwidth}[t]{0.45\textwidth}
  \begin{algorithmic}[1]
    \algrestore{tml-ctr}
    \Procedure{$\tmread_t$}{$a$}
    \State $v_t$ $\gets$ $\vmem(a)$
    \If{$\vglb = \vloc{t}$}
    \State  \Return $v_t$
    \Else\ \textbf{abort}
    \EndIf
    \EndProcedure\smallskip

    \Procedure{$\tmwrite_t$}{$a, v$}
    \If {$\mathit{even}(\vloc{t})$}
    \State {\bf if}\ {$!\mathit{cas}(\&\vglb, \vloc{t}, \vloc{t}\!+\!1)$}
    \State {\bf then}\ \textbf{abort}
    \setlength{\itemindent}{\myindent}
    \State {\bf else}\ $\vloc{t}$++
    \EndIf
    \State $\vmem(a)$ $\gets$ $v$
    \EndProcedure
  \end{algorithmic}
\end{varwidth}

\end{algorithm}

To support the concept of transactions, TM usually provide a number of
operations to programmers: operations to start (\tmbegin) or to end a
transaction (\tmcommit), and operations to read or write shared data
(\tmread, \tmwrite)\footnote{In general, arbitrary operations can be
  used here; for simplicity we use reads and writes to variables.}.
These operations can be called (invoked) from within a program
(possibly with some arguments, e.g., the variable to be read) and then
will return with a response. Except for operations that start
transactions, all other operations might potentially respond with
\tmabort, thereby aborting the whole transaction. 

We will use the Transactional Mutex Lock (TML) by Dalessandro \emph{et
  al.} \cite{DalessandroDSSS10} in \reflst{alg:tml} as a running
example.  It provides the four types of operations, but operation
\tmcommit\ in this algorithm will never respond with abort.  TML
adopts a very strict policy for synchronisation among transactions: as
soon as one transaction has successfully written to a variable, all
other transactions running concurrently will be aborted when they
execute a \tmread or \tmwrite operation. For synchronisation, TML uses
a global counter $\vglb$ (initially $0$), and each transaction $t$
uses a local variable $\vloc{t}$ to store a local copy of
$\vglb$. Variable $\vglb$ records whether there is a {\em live writing
  transaction}, i.e., a transaction which has started, has
neither committed nor aborted, and has executed (or is executing) a
write operation. More precisely, $\vglb$ is odd if there is a live
writing transaction, and even otherwise. Initially, we have no live
writing transaction and thus $\vglb$ is $0$ (and hence even).

Operation $\tmbegin_t$ copies the value of $\vglb$ into its local
variable $\vloc{t}$ and checks whether $\vglb$ is even. If so, the
transaction is started, and otherwise, the transaction attempts to
start again by rereading $\vglb$. Operation $\tmread_t$ succeeds as
long as $\vglb$ equals $\vloc{t}$ (meaning no writes have occurred
since the transaction $t$ began), otherwise it aborts the current
transaction. The first execution of $\tmwrite_t$ attempts to increment
$\vglb$ using a $\mathit{cas}$ (compare-and-swap), which atomically
compares the first and second parameters, and sets the first parameter
to the third if the comparison succeeds. If the $\mathit{cas}$ attempt
fails, a write by another transaction must have occurred, and hence,
the current transaction aborts. Otherwise $\vloc{t}$ is incremented
(making its value odd) and the write is performed. Note that because
$\vloc{t}$ becomes odd after the first successful write, all
successive writes that are part of the same transaction will perform
the write directly after testing $\vloc{t}$ at line $1$. Further note
that if the $\mathit{cas}$ succeeds, $\vglb$ becomes odd, which
prevents other transactions from starting, and causes all concurrent
live transactions still wanting to read or write to abort. Thus a
writing transaction that successfully updates $\vglb$ effectively
locks shared memory. Operation $\tmcommit_t$ checks to see if $t$ has
written to memory by testing whether $\vloc{t}$ is odd. If the test
succeeds, $\vglb$ is set to $\vloc{t}+1$. At line \ref{tml:8},
$\vloc{t}$ is guaranteed to be equal to $\vglb$, and therefore this
update has the effect of incrementing $\vglb$ to an even value,
allowing other transactions to begin.


\subsection{Histories}
\label{sec:histories}
\begin{table}[t]
\begin{center}
  \begin{tabular}{l@{\qquad\qquad}l}
    invocations~ & possible matching responses \\
    \hline \\[-6pt]
    $\tmbegin_p$ & $\ret{\tmbegin}_p$ \\[1pt]
    $\tmcommit_p$ & $\ret{\tmcommit}_p$, $\ret{\tmabort}_p$ \\[1pt]
    $\tmread_p(a)$ & $\ret{\tmread}_p(v)$, $\ret{\tmabort}_p$ \\[1pt]
    $\tmwrite_p(a,v)$ & $\ret{\tmwrite}_p$, $\ret{\tmabort}_p$
  \end{tabular}
\end{center}
Assume  $p$ is a transaction identifier
from a set of transactions $T$, $a$ is an address,  and $v$ a
value in the address.
\caption{Events of a TM implementation}
\label{fig:stmevents}
\end{table}

As standard in the literature, opacity is defined on the
\emph{histories} of an implementation.  Histories are sequences of
{\em events} that record all interactions between the TM
implementation and its clients.  
Thus each event is either an invocation or a response of a TM
operation. For the implementations we consider in this paper, possible
invocation and matching response events are given in
Table~\ref{fig:stmevents}.

We introduce notation $\bftmbegin_p$ to denote the two-element
sequence $\lseq \tmbegin_p, \ret{\tmbegin}_p \rseq$.  Similarly,
$\bftmwrite_p(x, v)$ denotes the sequence
$\lseq \tmwrite_p(x, v), \ret{\tmwrite}_p \rseq$ and
$\bftmread_p(x, v)$ denotes
$\lseq \tmread_p(x), \ret{\tmread}_p(v) \rseq$. We use notation
`$\cdot$' for sequence concatenation.
\begin{example}
\label{ex:1}
The following history is a possible execution of the TML, where the
address $x$ (initially $0$) is accessed by two transactions 2 and 3
running concurrently. \smallskip

\noindent
\hfill$
\begin{array}[b]{@{}l@{}}

  \lseq \begin{array}[t]{@{}l@{}}
          \tmbegin_3, \tmbegin_2,\ret{\tmbegin}_3 ,
          \ret{\tmbegin}_2, \tmwrite_3(x,4) \rseq \cdot {} \\
          \bftmread_2(x, 0) \cdot \lseq \ret{\tmwrite}_3 \rseq \cdot \bftmcommit_3
  \end{array}
\end{array}
$\hfill{}

\end{example}

We use the following notation on histories: for a history $h$, $h| p$
is the projection onto the events of transaction $p$ and $h[i..j]$ the
sub-sequence of $h$ from $h(i)$ to $h(j)$ inclusive.  We are interested
in three different types of histories. At the concrete level the TML
implementation produces histories of interleaved events (e.g., $h$ in
\refex{ex:1}). At an intermediate level, we are interested in
\emph{alternating histories}, where operations appear atomic, but
interleaving may occur at the level of transactions. At the abstract
level, we are interested in {\em sequential histories}, where there is
no interleaving at any level---transactions are atomic: completed
transactions end before the next transaction
starts.  

Let $\eseq$ denote the empty sequence. A history $h$ is
\emph{alternating} if $h = \eseq$ or $h$ is an alternating sequence of
invocation and matching response events starting with an invocation
and possibly ending with an invocation. We will assume that $h|p$ is
alternating for any history $h$ and transaction~$p$. Note that this
does not necessarily mean $h$ is alternating itself.  Opacity is
defined for well-formed histories, which formalise the allowable
interaction between a TM implementation and its clients. A projection
$h|p$ of a history $h$ onto a transaction $p$ is \emph{well-formed}
iff it is $\eseq$ or a sequence $s_0 \ldots s_m$ where
$s_0 = \tmbegin_p$ and for all $0 < i < m$,
$s_i \notin \{\ret{\tmbegin}_p, \ret{\tmcommit}_p,
\ret{\tmabort}_p\}$.
Furthermore, $h|p$ is \emph{committed} whenever
$s_m = \ret{\tmcommit}_p$ and \emph{aborted} whenever
$s_m = \ret{\tmabort}_p$.  In these cases, the transaction $h|p$ is
\emph{finished}, otherwise it is \emph{live}. A history is
\emph{well-formed} iff $h|p$ is well-formed for every transaction $p$.

\begin{example}
  The history in \refex{ex:1} is well-formed, and contains a committed
  transaction 3 and a live transaction 2.  $\hfill\Box$
\end{example}

\subsection{Opacity}
\label{sec:opacity}

The basic principle behind the definition of opacity (and similar
definitions) is the comparison of a given concurrent history against a
sequential one.  Within the concurrent history in question, we
distinguish between \emph{live}, \emph{committed} and \emph{aborted}
transactions.  Opacity imposes a number of constraints, that can be
categorised into three main types: (1) \emph{ordering constraints},
which describe how events occurring in a concurrent history may be
sequentialised; (2) \emph{semantic constraints} that describe validity
of a sequential history $hs$; and (3) a \emph{prefix-closure
  constraint}, which requires that each prefix of a concurrent history
can be sequentialised so that the ordering and semantic constraints
above are satisfied.
To help formalise these opacity constraints we introduce the following
notation.  We say a history $h$ is {\em equivalent} to a history $h'$,
denoted $h \equiv h'$, iff $h| p =h'| p$ for all transactions
$p \in T$.  Further, the {\em real-time order} on transactions $p$ and
$q$ in a history $h$ is defined as $p \prec_h q$ if $p$ is a completed
transaction and the last event of $p$ in $h$ occurs before the first
event of $q$. 

\paragraph{Sequential history semantics.}
We now formalise the notion of sequentiality for transactions: a
sequential history is alternating and does not interleave events of
different transactions. We first define non-interleaved histories,
noting that the definition must also cover live transactions.

\begin{definition}[Non-interleaved history]
  A well-formed history $h$ is {\em non-interleaved} if transactions
  do not overlap, i.e., for any transactions $p$ and $q$ and histories
  $h_1, h_2$ and $h_3$, if \smallskip

  $h = h_1 \cdot \lseq \tmbegin_p \rseq \cdot
  h_2 \cdot \lseq \tmbegin_q \rseq \cdot h_3$ \smallskip

  \noindent and $h_2$ contains no \tmbegin\ invocation events, then either $h_2$
  contains a response event $e$ such that
  $e \in \{\ret{\tmabort}_p, \ret{\tmcommit}_p\}$, or $h_3$ contains
  no events for a transaction $p$.
\end{definition}

\noindent In addition to being non-interleaved, a sequential history
has to ensure that the behaviour is meaningful with respect to the
reads and writes of the transactions. For this, we look at each
address in isolation and define the notion of a valid sequential
behaviour on a single address. To this end, we model shared memory by
a set $A$ of addresses mapped to values denoted by a set $V$. Hence
the type $A \rightarrow V$ describes the possible states of the shared
memory.

\begin{definition}[Valid history]\label{def:valid-hist}
  Let $h = \lseq ev_0, \ldots ,
  ev_{2n-1} \rseq$ be a sequence of alternating invocation and response
  events starting with an invocation and
  ending with a response.

  We say $h$ is {\em valid} if there exists a sequence of states
  $\sigma_0, \ldots , \sigma_{n}$ such that $\sigma_0(a) = 0$ for all
  addresses $a$, and for all $i$ such that $0 \leq i < n$ and
  $p \in T$:
  \begin{enumerate}
  \item if $ ev_{2i} = \tmwrite_p(a,v)$ and
    $ev_{2i+1} = \ret{\tmwrite}_p$
    then $\sigma_{i+1} = \sigma_{i}[a:=v]$; and
  \item if $ev_{2i} = \tmread_p(a)$ and
    $ev_{2i + 1} = \ret{\tmread}_p(v)$ then both $\sigma_i(a)=v$ and
    $\sigma_{i+1} = \sigma_{i}$ hold; and
  \item for all other pairs of events 
    $\sigma_{i+1} = \sigma_{i}$.
  \end{enumerate}
\end{definition}

The effect of the writes in a transaction must only be visible if a
transaction commits. All other writes must not affect memory.
However, all reads must be consistent with previously committed
writes. Therefore, only some histories of an object reflect ones that
could be produced by an STM. We call these the {\em legal} histories,
and they are defined as follows.

\begin{definition}[Legal history]
  \label{legal}
  Let $hs$ be a non-interleaved history, $i$ an index of $hs$, and
  $hs'$ be the projection of $hs[0..(i-1)]$ onto all events of
  committed transactions plus the events of the transaction to which
  $hs(i)$ belongs.  We say \emph{$hs$ is legal at $i$} whenever $hs'$
  is valid.  We say \emph{$hs$ is legal} iff it is legal at each index
  $i$.
\end{definition}

\noindent This allows us to define sequentiality for a single history,
which we lift to the level of specifications.

\begin{definition}[Sequential history]
  \label{def:sequential}
  A well-formed history $hs$ is {\em sequential} if it is
  non-interleaved and legal. 
  We denote by ${\cal S}$  the
  set of all possible well-formed sequential histories.
\end{definition}

\paragraph{Opaque histories.}

A given concrete history may be incomplete, i.e., consist of pending
operation calls. As some of these pending calls may have taken effect,
pending operation calls may be completed by adding matching
responses. Thus we define a function $extend$ that adds matching
responses to some of the pending invocations to a history $h$. There may
also be incomplete operation calls that have not taken effect; it is
safe to remove the pending invocations. We let $\complete{h}$ denote
the history $h$ with all pending invocations removed.
\begin{definition}[Opaque history]
  \label{def:opaque}
  \label{opaquedef}
  A history $h$ is {\em end-to-end opaque} iff for some
  $he \in extend(h)$, there exists a sequential history
  $hs \in {\cal S}$ such that $[he] \equiv hs$ and
  $\prec_{[he]} \subseteq \prec_{hs}$.  A history $h$ is {\em opaque}
  iff each prefix $h'$ of $h$ is end-to-end opaque; a set of histories
  ${\cal H}$ is {\em opaque} iff each $h \in {\cal H}$ is opaque; and
  an STM implementation is \emph{opaque} iff its set of histories is
  opaque.
\end{definition}
In \refdef{def:opaque}, conditions $[he] \equiv hs$ and
$\prec_{[he]} \subseteq \prec_{hs}$ establish the ordering constraints
and the requirement that $hs \in {\cal S}$ ensures the memory
semantics constraints. Finally, the prefix-closure constraints are
ensured because end-to-end opacity is checked for each prefix of
$[he]$.

\begin{example}
  The history in \refex{ex:1} is opaque; a corresponding sequential
  history is \smallskip

  \hfill $
  \begin{array}[t]{@{}l@{}}
    \bftmbegin_2 \cdot \bftmread_2(x, 0) \cdot
    \bftmbegin_3 \cdot  {}
    \\
    \bftmwrite_3(x,4)  \cdot \bftmcommit_3
  \end{array}
$\hfill\smallskip

\noindent
Note that reordering of $\tmread_2(x,0)$ and $\tmbegin_3$ is allowed
because their corresponding transactions overlap (even though the
operations themselves do not).
\end{example}

\section{Reducing opacity to linearizability}
\label{sec:reducing-opacity}

In this section, we show that it is possible to reduce a proof of
opacity of an implementation to a proof of linearizability against a
coarse-grained abstraction (where each TM operation is atomic). First,
we recap the definition of linearizability
(\refsec{sec:linearizability}), then we present soundness of the
result, and also establish that the method is complete
(\refsec{sec:soundn-compl}).

\subsection{Linearizability}
\label{sec:linearizability}

As with opacity, the formal definition of linearizability is given in
terms of histories (of invocation/response events); for every
concurrent history an equivalent alternating (invocations immediately
followed by the matching response) history must exist that preserves
real-time order of operations.  The {\em real-time order} on operation
calls\footnote{Note: this is different from the real-time order on
  transactions defined in Section \ref{sec:opacity}} $o_1$ and $o_2$
in a history $h$ is defined as $o_1 \pprec_h o_2$ if the response of
$o_1$ precedes the invocation of $o_2$ in $h$.

Linearizability differs from opacity in that it does not
deal with transactions; thus transactions may still be interleaved in
a matched alternating history.
As with opacity, the given concurrent history may be incomplete, and
hence, may need to be extended using $extend$ and all remaining
pending invocations may need to be removed. We say $lin(h, ha)$ holds
iff both $\complete{h} \equiv ha$ and
$\pprec_{\complete{h}} \subseteq \pprec_{ha}$ hold.
\begin{definition}[Linearizability]
  \label{def:linearized}
  A history $h$ is {\em linearized} by alternating history $ha$ iff
  there exists a history $he \in extend(h)$ such that $lin(he, ha)$.
  A concurrent object is linearizable with respect to a specification
  ${\cal A}$ (a set of alternating histories) if for each concurrent
  history $h$, there is a history $ha \in {\cal A}$ that linearizes
  $h$.
\end{definition}




\subsection{Soundness and completeness}
\label{sec:soundn-compl}

With linearizability formalised, we now present the two main theorems
for our proof method. The first establishes soundness, i.e., states
that one can prove opacity of an implementation by first linearizing
the concurrent operations, then establishing opacity of the linearized
history.
\begin{lemma}[Soundness per history \cite{DDSTW15}]
  \label{lem:opa-run}
  Suppose $h$ is a concrete history. For any alternating history $ha$
  that linearizes $h$, if $ha$ is opaque then $h$ is also opaque.
\end{lemma}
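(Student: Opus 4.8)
The plan is to unfold both definitions and chase the witnesses through. Suppose $ha$ linearizes $h$ and $ha$ is opaque. By \refdef{def:linearized}, there is $he_0 \in extend(h)$ with $lin(he_0, ha)$, i.e.\ $[he_0] \equiv ha$ and $\pprec_{[he_0]} \subseteq \pprec_{ha}$. Since $ha$ is opaque, every prefix of $ha$ is end-to-end opaque. To show $h$ is opaque I must show every prefix $h'$ of $h$ is end-to-end opaque. First I would reduce to showing $[he_0]$ itself is opaque: since $extend$ only adds matching responses to pending invocations and $[\cdot]$ then deletes the still-pending ones, every prefix of $h$ can be matched (by extending the same pending calls consistently, or dropping them) to a prefix of $[he_0]$ that is equivalent to it and has a compatible real-time order — so it suffices to prove end-to-end opacity of each prefix of $[he_0]$, which is the same as proving $[he_0]$ is opaque in the sense of \refdef{def:opaque}. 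Actually, cleaner: show directly that opacity of $ha$ as a \emph{set/sequence} already gives end-to-end opacity of each prefix $h'$ of $h$, by exhibiting for $h'$ an appropriate prefix $ha'$ of $ha$.

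The core step is: given a prefix $h'$ of $h$, produce a prefix $ha'$ of $ha$ that linearizes $h'$. Because $[he_0] \equiv ha$ with real-time order preserved, $ha$ is a particular interleaving (actually serialisation into alternating form) of the operation calls of $[he_0]$; a prefix $h'$ of $h$ determines, after suitable extension/completion, a down-closed set of operation calls of $[he_0]$, and taking $ha'$ to be the restriction of $ha$ to exactly those operation calls gives a prefix of $ha$ (here I use that real-time order is preserved: every call of $[he_0]$ whose response lies in the chosen prefix precedes, in $ha$, every call not yet invoked, so the chosen calls form a prefix of $ha$). Then $ha'$ is end-to-end opaque (as a prefix of the opaque $ha$), so there is $hs \in {\cal S}$ and an extension $hae \in extend(ha')$ with $[hae] \equiv hs$ and $\prec_{[hae]} \subseteq \prec_{hs}$. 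It remains to transfer this sequential witness $hs$ back to $h'$: I would argue $[he']$ and $[hae]$ are equivalent (both are, per transaction, the same sequence of events, since linearizability only reorders operation calls and equivalence is checked per transaction) and that the \emph{transactional} real-time order satisfies $\prec_{[he']} \subseteq \prec_{[hae]}$ — the key observation being that if transaction $p$ finishes before $q$ starts in $[he']$ then the same holds after linearization, because linearization preserves the real-time order $\pprec$ on operation calls and the begin/commit/abort events of a transaction are endpoints of operation calls. Chaining the two inclusions yields $\prec_{[he']} \subseteq \prec_{hs}$ with $[he'] \equiv hs$, i.e.\ $h'$ is end-to-end opaque.

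The main obstacle I anticipate is the bookkeeping around $extend$ and $[\cdot]$ for prefixes: a prefix $h'$ of $h$ may cut an operation call in the middle, and the extension chosen for $h$ need not restrict to a valid extension of $h'$; one has to be careful that the pending calls one completes (resp.\ drops) in $h'$ are completed (resp.\ dropped) consistently with the choice made for $h$, so that the induced operation-call set is genuinely a prefix of $ha$ and the two $complete$d histories remain equivalent. A secondary subtlety is checking that $\prec_{[he']} \subseteq \prec_{[hae]}$ rather than merely $\pprec$-containment — i.e.\ lifting the operation-call real-time order to the transaction real-time order — which needs that a transaction's last event in a history is the response of its last operation call, a consequence of well-formedness. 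Since this lemma is cited from \cite{DDSTW15} I would keep the argument at the level of this sketch and refer there for the full case analysis.
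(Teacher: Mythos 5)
The paper does not actually prove this lemma itself: it is imported verbatim from Derrick \emph{et al.}\ \cite{DDSTW15}, so there is no in-paper proof to compare your argument against. Your sketch is a sound reconstruction of the standard argument, and its essential steps are the right ones: map each prefix $h'$ of $h$ to the minimal operation-boundary prefix of $ha$ containing every call completed in $h'$ (a genuine prefix of $ha$ precisely because $\pprec_{\complete{he_0}} \subseteq \pprec_{ha}$ prevents calls invoked after $h'$ from being linearized among those calls), complete the pending invocations of $h'$ consistently with that choice so the completed histories stay equivalent, and lift operation-level order preservation to the transaction-level order $\prec$ via the fact that a transaction's first and last events are endpoints of operation calls; the remaining bookkeeping you flag around $extend$ and $\complete{\cdot}$ is routine.
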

The following theorem lifts this existing result to sets of histories
(of an implementation).
\begin{theorem}[Soundness]
  \label{thm:opa-run}
  Suppose $\mathcal{A}$ is a set of alternating opaque histories. Then
  a set of histories $\mathcal{H}$ is opaque if for each
  $h \in \mathcal{H}$, there exists a history $ha \in \mathcal{A}$ and
  an $he \in extend(h)$ such that $lin(he, ha)$.
\end{theorem}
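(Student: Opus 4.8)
The plan is to derive Theorem~\ref{thm:opa-run} directly from the per-history soundness result, \reflem{lem:opa-run}, by unwinding the definition of an opaque \emph{set} of histories. Recall that a set $\mathcal{H}$ is opaque iff every $h \in \mathcal{H}$ is opaque (\refdef{def:opaque}). So it suffices to fix an arbitrary $h \in \mathcal{H}$ and show that $h$ is opaque, using only the hypothesis that there is some $ha \in \mathcal{A}$ and some $he \in extend(h)$ with $lin(he, ha)$, together with the fact that every history in $\mathcal{A}$ is opaque.

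First I would observe that the hypothesis $lin(he, ha)$ for $he \in extend(h)$ is, by \refdef{def:linearized}, precisely the statement that $ha$ linearizes $h$. Second, since $ha \in \mathcal{A}$ and $\mathcal{A}$ is a set of opaque histories, $ha$ is opaque. Third, I would invoke \reflem{lem:opa-run} with this $h$ and this $ha$: $ha$ linearizes $h$ and $ha$ is opaque, so $h$ is opaque. Since $h \in \mathcal{H}$ was arbitrary, every member of $\mathcal{H}$ is opaque, and hence $\mathcal{H}$ is opaque, which is the conclusion.

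The argument is essentially a quantifier shuffle: the theorem is the ``setwise'' packaging of the lemma, and the only content beyond \reflem{lem:opa-run} is matching the phrasing ``there exists $ha \in \mathcal{A}$ and $he \in extend(h)$ such that $lin(he, ha)$'' against the definition of ``$ha$ linearizes $h$''. The one place to be slightly careful --- and the only candidate for a genuine obstacle --- is the interplay between $extend$ and prefixes: opacity of $h$ is defined via end-to-end opacity of \emph{every prefix} $h'$ of $h$, and \reflem{lem:opa-run} must already be absorbing that prefix quantification internally (since it concludes ``$h$ is opaque'', not merely ``$h$ is end-to-end opaque''). Provided \reflem{lem:opa-run} is stated as in the excerpt, this is handled entirely inside the lemma and nothing further is needed; I would just flag explicitly that we rely on the lemma's conclusion being full opacity rather than end-to-end opacity, so that no extra reasoning about prefixes of $h$ or about restricting $ha$ to corresponding prefixes is required at the set level.
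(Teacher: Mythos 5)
Your proposal is correct and matches the paper's own treatment: the paper presents \refthm{thm:opa-run} as a direct lifting of \reflem{lem:opa-run} to sets of histories, exactly the per-history application you describe, with the hypothesis ``$\exists ha \in \mathcal{A},\, he \in extend(h).\ lin(he, ha)$'' unwinding via \refdef{def:linearized} to ``$ha$ linearizes $h$''. Your remark that the prefix quantification is absorbed inside \reflem{lem:opa-run} (whose conclusion is full opacity, not merely end-to-end opacity) is also the right observation, and nothing further is needed.
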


Next we establish completeness of our proof method, i.e., we show that
if an implementation is opaque, there is some specification $S$ satisfying opacity,
and such that the implementation history is linearizable to $S$.


\begin{lemma}[Existence of Linearization]
  \label{lem:opa-run-complete}
  If $h$ is an opaque history then there exists an alternating history
  $ha$ such that $lin(h, ha)$ and $ha$ is end-to-end opaque.
\end{lemma}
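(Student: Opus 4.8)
The plan is to unfold the definition of opacity for $h$, use the witnessing sequential history to construct the required alternating history $ha$, and then verify the two conditions $lin(h,ha)$ and end-to-end opacity of $ha$. Since $h$ is opaque, in particular it is end-to-end opaque, so there is $he \in extend(h)$ and a sequential history $hs \in \mathcal{S}$ with $[he] \equiv hs$ and $\prec_{[he]} \subseteq \prec_{hs}$. The key idea is that $hs$, being sequential, is already alternating (operations appear atomic) and even non-interleaved at the transaction level; so $hs$ is a perfectly good candidate for $ha$ \emph{as far as the opacity condition is concerned}. The difficulty is that $hs$ need not respect the \emph{operation-level} real-time order $\pprec_{[he]}$ demanded by $lin$: opacity only preserves the \emph{transaction-level} order $\prec$, and it explicitly permits reordering operations of overlapping transactions (as in the example after \refdef{def:opaque}, where $\tmread_2$ and $\tmbegin_3$ get swapped). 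So $hs$ itself will generally fail $\pprec_{[he]} \subseteq \pprec_{hs}$.

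The fix is to build $ha$ from $he$ directly rather than from $hs$: take $he$ and "collapse" each operation by moving its response to immediately follow its invocation, i.e., linearize each pending-or-completed operation call at (say) the point of its invocation, producing an alternating history $ha$ that is a permutation of $[he]$ preserving (a) the order of events within each transaction $h|p$ and (b) the relative order of non-overlapping operations. Concretely, process $[he]$ left to right; whenever an invocation is encountered, emit it together with its matching response (which occurs later in $[he]$, by completeness of $[he]$). This yields $ha$ with $[he] \equiv ha$ (each projection $h|p$ is unchanged, since within a single transaction events are already alternating and we only pull responses leftward past events of \emph{other} transactions), and with $\pprec_{[he]} \subseteq \pprec_{ha}$ (if $o_1$'s response precedes $o_2$'s invocation in $[he]$, then $o_1$ is fully emitted before $o_2$'s invocation is reached). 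Hence $lin([he], ha)$, and since $he \in extend(h)$ we get $lin(h, ha)$ via \refdef{def:linearized}.

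It remains to show $ha$ is end-to-end opaque. This is where the soundness direction helps: $ha$ is itself a concrete (alternating) history, $ha$ linearizes $h$ (just shown), and $h$ is end-to-end opaque; but \reflem{lem:opa-run} runs the \emph{other} way (from $ha$ opaque to $h$ opaque), so we cannot invoke it directly. Instead I would argue end-to-end opacity of $ha$ from first principles using the same witnesses: since $ha$ is already complete and alternating, $[ha] = ha$ and $ha \in extend(ha)$; I claim the \emph{same} sequential history $hs$ works, i.e., $ha \equiv hs$ and $\prec_{ha} \subseteq \prec_{hs}$. Equivalence follows by transitivity: $ha \equiv [he] \equiv hs$. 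For the real-time order, the crucial observation is that the transaction-level order is insensitive to the operation-level reshuffling we performed: since $ha$ is a permutation of $[he]$ preserving each $h|p$ and the order of non-overlapping operations, the last event of a completed transaction $p$ and the first event of $q$ stand in the same relation in $ha$ as in $[he]$ whenever $p \prec_{[he]} q$ — so $\prec_{ha} \subseteq \prec_{[he]} \subseteq \prec_{hs}$. (One must check $ha$ does not \emph{create} new completed-transaction orderings that $hs$ lacks, but a completed transaction's last event is its $\ret{\tmcommit}_p$ or $\ret{\tmabort}_p$, whose position relative to events of other transactions is preserved by our construction, so $\prec_{ha} \subseteq \prec_{[he]}$.) This gives end-to-end opacity of $ha$, completing the proof.

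The main obstacle is the second bullet-worth of work above: justifying that the invocation-point collapse preserves $\equiv$ and, more delicately, that it does not manufacture spurious real-time edges between transactions — i.e., that $\prec_{ha} \subseteq \prec_{[he]}$. The subtlety is that moving a response leftward could in principle make a transaction "finish earlier" relative to another transaction's start; the argument must pin down that a finished transaction's terminal event ($\ret{\tmcommit}$/$\ret{\tmabort}$) is never moved past an invocation of another transaction in a way that reverses order, which follows because we only ever pull a response leftward to just after its own invocation, and the terminal invocation of $p$ already precedes everything we'd move its response past. Everything else is bookkeeping on the definitions of $extend$, $[\cdot]$, $\equiv$, and $\pprec$.
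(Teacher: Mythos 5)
Your construction of $ha$ is a correct linearization of $h$, but the final step fails: the claim $\prec_{ha} \subseteq \prec_{\complete{he}}$ is false, and with it the end-to-end opacity of $ha$. Collapsing each operation to its invocation point pulls a completed transaction's terminal response $\ret{\tmcommit}_p$ leftward past every event of \emph{other} transactions lying between $\tmcommit_p$ and $\ret{\tmcommit}_p$ in $\complete{he}$ --- in particular past another transaction's $\tmbegin_q$ --- which manufactures a new transaction-level edge $p \prec_{ha} q$ for transactions that overlapped in $\complete{he}$. Your parenthetical defence (``whose position relative to events of other transactions is preserved by our construction'') is exactly what the construction does \emph{not} guarantee. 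Concretely, take $\complete{he} = \bftmbegin_p \cdot \bftmwrite_p(x,1) \cdot \lseq \tmcommit_p, \tmbegin_q, \ret{\tmbegin}_q \rseq \cdot \bftmread_q(x,0) \cdot \lseq \ret{\tmcommit}_p \rseq \cdot \bftmcommit_q$. This history is opaque: $q$ overlaps $p$'s commit, so the sequential witness $hs$ serialises $q$ entirely before $p$, and its read of $0$ is valid. Your collapse yields $ha = \bftmbegin_p \cdot \bftmwrite_p(x,1) \cdot \bftmcommit_p \cdot \bftmbegin_q \cdot \bftmread_q(x,0) \cdot \bftmcommit_q$, which does satisfy $lin(he,ha)$, but now $p \prec_{ha} q$, so any sequential witness for $ha$ must order $p$ before $q$ and hence force the read to return $1$; thus $ha$ is not end-to-end opaque. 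Collapsing at response points instead fails symmetrically (an invocation moved rightward past $\ret{\tmcommit}_p$ creates the same kind of edge), so no fixed per-operation collapse of $\complete{he}$ can work: choosing operation-level linearization points correctly requires knowing the serialisation order.

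This is why the paper's proof runs in the opposite direction. It keeps the opacity witness $hs$ --- thereby fixing the serialisation order and the per-transaction semantics once and for all --- and repairs the operation-level order by repeatedly transposing the narrowest \emph{mis-ordered pair}, i.e.\ operations $o_t, o_{t'}$ with $o_t \pprec_{hs} o_{t'}$ but $o_{t'} \pprec_{\complete{he}} o_t$. Each transposition must be checked not to create a new edge in $\prec$; the only dangerous case is moving a commit ahead of a begin, and that case is excluded by a short argument from $\prec_{\complete{he}} \subseteq \prec_{hs}$ (if $o_{t'}$ were a commit with $o_{t'} \pprec_{\complete{he}} o_t$ and $o_t$ a begin, then $t' \prec t$ in real time, contradicting $t$ being serialised before $t'$). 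To salvage your approach you would have to choose collapse points consistently with $hs$ operation by operation, which amounts to reproducing that transposition argument; as it stands, the step ``$\prec_{ha} \subseteq \prec_{\complete{he}} \subseteq \prec_{hs}$'' is the gap.
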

\begin{proof}
  From the assumption that $h$ is opaque, there exists an extension
  $he \in extend(h)$ and a sequential history $hs \in \mathcal{S}$
  such that 
  $\complete{he} \equiv hs$ and
  $\prec_{\complete{he}} \subseteq
  \prec_{hs}$. 
  Our proof proceeds by transposing operations in $hs$ to obtain
  an alternating history $ha$ such that $lin(he, ha)$. Our transpositions
  preserve end-to-end opacity, so $ha$ is end-to-end opaque.

  We consider pairs of operations $o_t$ and $o_{t'}$ such that
  $o_{t}\pprec_{hs} o_{t'}$, but $o_{t'}\pprec_{\complete{he}}o_{t}$,
  which we call {\em mis-ordered pairs}.  If there are no mis-ordered
  pairs, then $lin(he, hs)$, and we are done. Let $o_{t}$ and $o_{t'}$
  be the mis-ordered pair such that the distance between $o_{t}$ and
  $o_{t'}$ in $hs$ is least among all mis-ordered pairs.  Now, $hs$
  has the form $\ldots o_{t} g o_{t'} \ldots$. Note that $g$ does not
  contain any operations of transaction $t$, since if there were some
  operation $o$ of $t$ in $g$, then because opacity preserves program
  order and $o_t\pprec_{hs}o$, we would have
  $o_t\pprec_{\complete{he}}o$. Thus $o,o_{t'}$ would form a
  mis-ordered pair of lower distance, contrary to hypothesis. For a
  similar reason, $g$ does not contain any operations of $t'$. Thus,
  so long as we do not create a new edge in the opacity order
  $\prec_{hs}$, we can reorder $hs$ to (1)
  $\ldots g o_{t'} o_{t}\ldots$ or (2) $\ldots o_{t'} o_{t} g\ldots$
  while preserving opacity. A new edge can be created only by
  reordering a pair of begin and commit operations so that the commit
  precedes the begin.  If $o_t$ is not a begin operation, then we
  choose option (1). Otherwise, note that $o_{t'}$ cannot be a commit,
  because since $o_{t'}\pprec_{\complete{he}} o_t$, $t' \prec t$, and
  thus $t$ could not have been serialised before $t'$.  Since $o_{t'}$
  is not a commit, we can choose option (2).  Finally, we show that
  the new history has no new mis-ordered pairs. Assume we took option
  (1). Then if there is some $o$ in $g$ such that
  $o_t \pprec_{\complete{he}} o$ we would have
  $o_{t'}\pprec_{\complete{he}} o$, and thus $o, o_{t'}$ would form a
  narrower mis-ordered pair. The argument for case (2) is symmetric.
  Thus, we can repeat this reordering process and eventually arrive at
  an end-to-end opaque history $ha$ that has no mis-ordered pairs, and
  thus $lin(he, ha)$.
\end{proof}

\begin{theorem}[Completeness]
If ${\cal H}$ is a prefix-closed set of opaque histories, then there
is some prefix-closed set of opaque alternating histories
${\cal H}'$ such that for each $h\in{\cal H}$ there is some
$h'\in{\cal H}'$ such that $lin(h, ha)$.
\end{theorem}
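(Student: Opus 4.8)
**

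The plan is to lift Lemma~\ref{lem:opa-run-complete} from individual histories to the prefix-closed set $\mathcal{H}$, much as Theorem~\ref{thm:opa-run} lifts Lemma~\ref{lem:opa-run}. The natural candidate is to take $\mathcal{H}'$ to be the set of all alternating histories obtained by applying Lemma~\ref{lem:opa-run-complete} (via some fixed choice function) to the histories of $\mathcal{H}$, and then closing under prefixes. Concretely, for each $h \in \mathcal{H}$ pick a witness $ha_h$ with $lin(h, ha_h)$ and $ha_h$ end-to-end opaque; let $\mathcal{H}_0 = \{ ha_h : h \in \mathcal{H}\}$ and $\mathcal{H}' = \{ ha' : ha' \text{ is a prefix of some } ha \in \mathcal{H}_0\}$. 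By construction $\mathcal{H}'$ is prefix-closed and for each $h \in \mathcal{H}$ there is $h' = ha_h \in \mathcal{H}'$ with $lin(h, h')$ — matching the statement (reading its ``$lin(h,ha)$'' as $lin(h,h')$).

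First I would confirm that every element of $\mathcal{H}_0$ is end-to-end opaque, which is immediate from Lemma~\ref{lem:opa-run-complete}. The substantive step is to show that \emph{every prefix} of an end-to-end opaque alternating history is again end-to-end opaque, so that all of $\mathcal{H}'$ consists of opaque histories. This is where I expect the main obstacle to lie: end-to-end opacity of $ha$ only directly gives a sequential witness for $[ha]$ itself, not for its prefixes, so one cannot simply read off opacity of prefixes from the definition. The route I would take is to go back to $\mathcal{H}$: since $\mathcal{H}$ is prefix-closed and each $h \in \mathcal{H}$ is \emph{opaque} (not merely end-to-end opaque), every prefix of $h$ lies in $\mathcal{H}$ and is opaque. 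I would then argue that a prefix $ha'$ of $ha_h$ corresponds to a prefix $h'$ of $h$ (truncating $h$ at the point whose linearization produces $ha'$), use $lin$ to relate $h'$ and $ha'$, and invoke Lemma~\ref{lem:opa-run} (soundness per history) together with opacity of $h'$ to conclude $ha'$ is opaque, hence end-to-end opaque. Care is needed because the linearization order on $ha_h$ must restrict coherently to prefixes; I would choose the truncation point of $h$ to be the last invocation/response event whose image under the linearization still appears in $ha'$, so that $ha'$ linearizes the corresponding prefix of $h$.

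An alternative, cleaner packaging avoids the prefix-of-opaque-history subtlety entirely: define $\mathcal{H}'$ directly as $\{ ha : ha \text{ is alternating and end-to-end opaque and } lin(h, ha) \text{ for some } h \in \mathcal{H}\}$, enlarged to its prefix closure, and then prove prefix-closedness of the opaque part using the correspondence above. Either way, the two facts to nail down are (i) prefixes of alternating histories are alternating — routine from the definition of alternating — and (ii) the linearization witness behaves well under truncation, so that Lemma~\ref{lem:opa-run} applies to each prefix. I would write the proof as: choose witnesses, form $\mathcal{H}'$, note prefix-closedness by construction, and then discharge opacity of each $h' \in \mathcal{H}'$ by exhibiting a prefix $h$ of some element of $\mathcal{H}$ with $lin(h, h')$ and appealing to Lemma~\ref{lem:opa-run}. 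The main work, and the only place requiring genuine care rather than bookkeeping, is establishing (ii).
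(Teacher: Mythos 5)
Your construction---prefix-closing a set of chosen witnesses $ha_h$ obtained from \reflem{lem:opa-run-complete}---leaves you with the obligation you yourself flag as the main work: showing that every prefix $ha'$ of a witness is end-to-end opaque. The tool you propose for discharging it does not apply: \reflem{lem:opa-run} transfers opacity \emph{from} the alternating history $ha$ \emph{to} the concrete history $h$ it linearizes, whereas you need the converse (opacity of the concrete prefix $h'$ plus $lin(h',ha')$ implying opacity of $ha'$). That converse is false in general: a linearization may order a commit operation entirely before an overlapping begin operation, creating a new transaction-level edge in $\prec$ that was absent from the concrete history and that rules out the serialisation witnessing its opacity. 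For instance, $h = \lseq \tmbegin_p, \ret{\tmbegin}_p, \tmwrite_p(x,1), \ret{\tmwrite}_p, \tmbegin_q, \tmcommit_p, \ret{\tmcommit}_p, \ret{\tmbegin}_q, \tmread_q(x), \ret{\tmread}_q(0) \rseq$ is opaque ($q$ serialises before $p$, and $p \not\prec_h q$ since $\tmbegin_q$ precedes $\ret{\tmcommit}_p$), but the linearization placing $p$'s commit before $q$'s begin forces $p \prec q$ while $q$ still reads $0$, so it is not opaque. Re-applying \reflem{lem:opa-run-complete} to $h'$ does not rescue this either: it produces \emph{some} end-to-end opaque linearization of $h'$, not necessarily the particular prefix $ha'$ of the witness you committed to. So step (ii) is a genuine gap, not bookkeeping.

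The paper avoids this obligation by a different choice of ${\cal H}'$: it takes \emph{all} opaque histories $h'$ such that $lin(h,h')$ for some $h\in{\cal H}$, rather than the prefix closure of one witness per history. With that definition, opacity of members holds by construction, and prefix-closedness follows from two observations: the set of opaque histories is prefix-closed (opacity is itself a prefix-closure condition), and the set of linearizations of a prefix-closed set is prefix-closed (essentially your truncation correspondence, which is the sound part of your plan); \reflem{lem:opa-run-complete} is then used only to show that this ${\cal H}'$ contains a linearization of each $h\in{\cal H}$. Your ``alternative packaging'' comes close to this, but by again enlarging to a prefix closure and then relying on the same reversed use of \reflem{lem:opa-run}, it inherits the same gap; the fix is to define ${\cal H}'$ by the properties you need (opaque and linearizing some member of ${\cal H}$) and prove prefix-closedness of that set directly, rather than to prefix-close a set of chosen witnesses and try to recover opacity of the prefixes afterwards.
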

\begin{proof}
  Let
  ${\cal H}' = \{h'. h' \text{ is opaque and } \exists h\in{\cal
    H}. lin(h, h')\}$.
  Note that both the set of all opaque histories and the set of
  linearizable histories of any prefix-closed set are themselves
  prefix closed.  Thus, ${\cal H}'$ is prefix closed. Because
  ${\cal H}'$ is prefix closed, and each element is end-to-end opaque,
  each element of ${\cal H}'$ is opaque.  For any $h\in{\cal H}$,
  \reflem{lem:opa-run-complete} implies that there is some
  $ha\in{\cal H}'$.
\end{proof}

\section{Proving opacity via linearizability}
\label{sec:proving-opacity-via}

This section describes a proof method for reducing opacity to
linearizability using our running TML example. In
\refsec{sec:coarse-grain-abstr}, we present the coarse-grained
abstraction for TML, which we will refer to as \tmlcga. There are two
distinct proof steps: (1) proving that \tmlcga is opaque; and (2)
proving that TML linearizes to \tmlcga. Both steps could be performed
using any of the existing methods in the literature. We opt for a
fully mechanised simulation-based method for the opacity proof, step
(1) (see \refsec{sec:opac-coarse-grain}); and a model-checking
approach for step (2) (see \refsec{sec:line-against-coarse}). The
combination of the two proofs is much simpler than a single proof of
opacity of the original algorithm.

\subsection{A coarse-grained abstraction}
\label{sec:coarse-grain-abstr}

The coarse-grained abstraction that can be used to prove opacity of
the TML is given in \reflst{alg:tml-cga}. Like TML in
\reflst{alg:tml}, it uses meta-variables $\vloc{t}$ (local to transaction
$t$) and $\vglb$ (shared by all transactions). Each operation is
however, significantly simpler than the TML operations, and performs
the entire operation in a single atomic step.

\begin{algorithm}
  \caption{\tmlcga: Coarse-grained abstraction of TML}
  \label{alg:tml-cga}

  \begin{varwidth}[t]{0.45\textwidth}

    \begin{algorithmic}[1]
      \Procedure{Init}{}
      \State $\vglb$ $\gets$ $0$
      \EndProcedure
      \algstore{tmlcga-ctr}
    \end{algorithmic}

    \begin{algorithmic}[1]
      \algrestore{tmlcga-ctr}
      \Procedure{$\abegin_t$}{}
      \Atomic
      \Await {$\mathit{even}(\vglb)$} \label{tml-cga-begin}
      \State $\vloc{t}$ $\gets$ $\vglb$
      \EndAtomic
      \EndProcedure \smallskip

      \Procedure{$\acommit_t$}{}
      \Atomic
      \If {$\mathit{odd}(\vloc{t})$}
      \State $\vglb$++
      \EndIf
      \EndAtomic
      \EndProcedure
      \algstore{tmlcga-ctr}

    \end{algorithmic}
  \end{varwidth}
  \hfill
  \begin{varwidth}[t]{0.45\textwidth}
    \begin{algorithmic}[1]
      \algrestore{tmlcga-ctr}
      \Procedure{$\aread_t$}{$a$}
      \Atomic
      \If {$\vglb = \vloc{t}$}
      \State \Return $\vmem(a)$
      \Else\  \textbf{abort}
      \EndIf
      \EndAtomic
      \EndProcedure
      \smallskip

      \Procedure{$\awrite_t$}{$a, v$}
      \Atomic
      \If {$\vglb \neq \vloc{t}$}
      \State \textbf{abort}
      \EndIf
      \If {$\mathit{even}(\vloc{t})$}
      \State $\vloc{t}$++; $\vglb$++
      \State $\vmem(a)$ $\gets$ $v$
      \EndIf
      \EndAtomic
      \EndProcedure
    \end{algorithmic}
  \end{varwidth}
\end{algorithm}

\subsection{Opacity of the coarse-grained abstraction}
\label{sec:opac-coarse-grain}

Several methods for proving opacity have been developed, and we are
free to choose any of these methods. We leverage two existing results
from the literature: the TMS2 specification by Doherty \emph{et al.}
\cite{DGLM13}, and the mechanised proof that TMS2 is opaque by Lesani
\emph{et al.}  \cite{LLM12}. Using these results, it is sufficient
that we prove trace refinement (i.e., trace inclusion of visible
behaviour) between \tmlcga and the TMS2 specification. The rigorous
nature of these existing results, means that a mechanised proof of
refinement against TMS2 will also comprise a rigorous proof of opacity
of \tmlcga.

 TMS2 is
formalised using input/output automata \cite{LynchVaan95}, and hence,
our formalisations will also use IOA. Moreover, M{\"u}ller
\cite{MüllerIOA1998} has mechanised the IOA theory (including its
simulation rules) in Isabelle, which is now part of the standard
Isabelle distribution \cite{NipkowPW02}. We thus chose to carry out
our proofs within Isabelle. 
First we define I/O automata.

\begin{definition}
  An \emph{I/O automaton (IOA)} is a labelled transition
  system $A$ with a set of states $\states{A}$, a set of actions
  $acts(A)$ (partitioned into internal and external actions), a set of
  start states $start(A)\subseteq \states{A}$ and a transition
  relation
  $trans(A)\subseteq \states{A}\times acts(A)\times \states{A}$ (so
  that the actions label the transitions).
\end{definition}

Next we formalise refinement and a proof method for it based on
forward simulation. An {\em execution} of an IOA $A$ is a sequence
$\sigma$ of alternating states and actions, beginning with a state in
$start(A)$, such that for all states $\sigma_i$ except the last,
($\sigma_i, \sigma_{i+1}, \sigma_{i+2}) \in trans(A)$. A {\em
  reachable} state of $A$ is a state appearing in an execution of
$A$. An {\em invariant} of $A$ is a predicate satisfied by all
reachable states of $A$. A {\em trace} of $A$ is any sequence of
(external) actions obtained by restricting the actions of $A$ to its
external actions. The set of traces of $A$ represents $A$'s externally
visible behaviour. We say IOA $C$ {\em refines} $A$, denoted
$A \sqsubseteq C$, iff every trace of $C$ is also a trace of $A$. We
say $A$ is \emph{equivalent} to $C$ iff both $A \sqsubseteq C$ and
$C \sqsubseteq A$.

In our setting, each externally visible behaviour consists of the
sequence of invoke and response events, including the input/output
values of reads and writes. Since TMS2 has been specified to only
allow inputs and outputs that result from opaque transactions, a proof
that \tmlcga is a refinement of TMS2 also implies opacity of \tmlcga.

The standard way of verifying a refinement is to use a {\em forward
  simulation} between the implementation and the specification, as
this allows one to verify the refinement in a step-wise manner.  We let
$\Sigma_A^E$ and $\Sigma_A^I$ denote the external and internal actions
of IOA $A$, respectively. Writing $\Trans{C}{cs}{a}{cs'}$ for
$(cs, a, cs')\in trans(C)$, we define the following.
\begin{definition}\label{def:for-sim}
A \emph{forward simulation} from a concrete IOA $C$ to an abstract IOA
$A$ is a relation $R \subseteq \states{C} \times \states{A}$ such that
each of the following holds.  \smallskip

\noindent \emph{Initialisation}.
\begin{gather*}
\forall cs \in start(C) \bullet \exists as \in start(A) \bullet (cs, as) \in R
\end{gather*}
\emph{External step correspondence}.
\begin{gather*}
\begin{array}{@{}l@{}}
  \forall cs\in reach(C), as\in reach(A), a\in \Sigma_C^E, cs' \in
  \states{C} \bullet {} \\
  \quad (cs, as)\in R \land \Trans{C}{cs}{a}{cs'}  \imp\\
  \qquad \exists
  as'\in \states{A} \bullet (cs', as')\in R \land \Trans{A}{as}{a}{as'}
\end{array}
\end{gather*}
\emph{Internal step correspondence}.
\begin{gather*}
\begin{array}[b]{@{}l@{}}
  \forall cs\in reach(C), as\in reach(A), a\in \Sigma_C^I, cs' \in
  \states{C} \bullet {}\\
  \quad (cs, as)\in R \land \Trans{C}{cs}{a}{cs'} \imp  \\
  \qquad (cs', as)\in R \lor{} \\[-3pt]
  \qquad \exists as'\in \states{A}, a' \in \Sigma_A^I
  \bullet (cs', as')\in R \land \Trans{A}{as}{a'}{as'}
\end{array}
\end{gather*}

\end{definition}

\newcommand{\vtime}[1]{\mathit{time}_{#1}}
\newcommand{\vrdset}[1]{\mathit{rdSet}_{#1}}
\newcommand{\vwrset}[1]{\mathit{wrSet}_{#1}}
\newcommand{\vstatus}[1]{\mathit{status}_{#1}}
\newcommand{\vmems}{\mathit{memSeq}}
\newcommand{\latestMem}{\mathit{latestMem}}

\begin{figure}[!t]
\centering
  \begin{minipage}[t]{0.7\columnwidth} {\small
\bgroup
\setlength{\tabcolsep}{1.2em}
\newlength{\myrowsep}
\setlength{\myrowsep}{3em}
\begin{tabular}{@{}l@{~~~~}l@{}}
\action{\beginInv{t}}
{\vstatus{t} = \pcNotStarted}
{\vstatus{t} := \pcBeginPending\\&
 \mathit{beginIdx}_t := \mathit{maxIdx}
                            }
&
\qquad\qquad 
\action{\beginResp{t}}
{\vstatus{t} = \pcBeginPending}
{\vstatus{t} := \pcReady}
\\[1.3\myrowsep]
\action{\readInv{t}{a}}
{\vstatus{t} = \pcReady}
{\vstatus{t} := \pcDoRead(a)}
&\qquad\qquad 
\action{\readResp{t}{v}}
{\vstatus{t} = \pcReadResp(v)}
{\vstatus{t} := \pcReady}
\\[\myrowsep]
\action{\writeInv{t}{a, v}}
{\vstatus{t} = \pcReady}
{\vstatus{t} := \pcDoWrite(a, v)}
&\qquad\qquad 
\action{\writeResp{t}}
{\vstatus{t} = \pcWriteResp}
{\vstatus{t} := \pcReady}
\\[\myrowsep]
\action{\commitInv{t}}
{\vstatus{t} = \pcReady}
{\vstatus{t} := \pcDoCommit}
&\qquad\qquad 
\action{\commitResp{t}}
{\vstatus{t} = \pcCommitResp}
{\vstatus{t} := \pcCommitted}
\\[\myrowsep]
\multicolumn{2}{@{}l}{\action{\abortResp{t}}
{\vstatus{t} \notin \{\pcNotStarted, \pcReady, \pcCommitResp, \pcCommitted, \pcAborted\}}
{\vstatus{t} := \pcAborted}}
\\[\myrowsep]
\action{\doCommitReadOnly{t}{n}}
{\vstatus{t} = \pcDoCommit\\&
 \dom(\vwrset{t}) = \emptyset\\&
 \vidx{t}{n}}
{\vstatus{t} := \pcCommitResp}
&\qquad\qquad 
\action{\doCommitWriter{t}}
{\vstatus{t} = \pcDoCommit\\&
 \vrdset{t} \subseteq \latestMem 
                       }
{\vstatus{t} := \pcCommitResp\\&
\vmems := \\&
\quad \vmems \oplus \mathit{newMem_t}}
\\[2\myrowsep]
\action{\doRead{t}{a, n}}
  {\vstatus{t} = \pcDoRead(a)\\& a\in \dom(\vwrset{t}) \vee {} \\
  & \ \ \vidx{t}{n}}
{\sif\ a \in \dom(\vwrset{t})\ \sthen\\
& \ \ \ \vstatus{t} := \\
& \ \ \ \ \ \ \pcReadResp(\vwrset{t}(a))\\&
 \selse\\&
 \ \ \ v := \vmems(n)(a)\\&
 \ \ \ \vstatus{t} := \pcReadResp(v)\\&
 \ \ \ \vrdset{t} := \\&
 \ \ \ \ \ \vrdset{t}\oplus \{a \mapsto v\}}
&\qquad\qquad 
\action{\doWrite{t}{a, v}}
{\vstatus{t} = \pcDoWrite(a, v)}
{\vstatus{t} := \pcWriteResp\\&
 \vwrset{t} := \\&
 \ \ \vwrset{t}\oplus \{a \mapsto v\}}
\end{tabular}
\egroup\smallskip

{\bf where}
$
\begin{array}[t]{@{}r@{~~}c@{~~}l@{}}
  \mathit{maxIdx} &\sdef & max(\dom(\vmems))%
  \\[1mm]
  \latestMem &\sdef & \vmems(\mathit{maxIdx})\\[1mm]
  \mathit{newMem_t} &\sdef & \{ \mathit{maxIdx}+1 \mapsto (\latestMem\oplus \vwrset{t})\} 
   \\[1mm]
  \vidx{t}{n} &\sdef & \mathit{beginIdx}_t \leq n \leq \mathit{maxIdx} \wedge {}\\
                        & & \vrdset{t} \subseteq \vmems(n) 
\end{array}
$ }
\end{minipage}

\caption{The transitions of TMS2\label{fig:tms2}}
\end{figure}

\noindent
\paragraph{The TMS2 specification.}  The TMS2 specification
\cite{DGLM13} is designed to capture the structural patterns common to
most TM implementations. The actions of TMS2 are given in
\reffig{fig:tms2}. We use notation `$\oplus$' to denote functional
override. For each transition, the first line gives the action name.
--- names of the form $inv_t(\texttt{Op})$ and $resp_t(\texttt{Op})$
are external invocation and response actions of the operation {\tt
  Op}, by transaction $t$ respectively; all others names denote
internal actions.  The transition is \emph{enabled} if all its
preconditions, given after the keyword \textsf{Pre}, hold in the
current state.  State modifications (effects) of a transition are
given as assignments after the keyword \textsf{Eff}.


\OMIT{The status is `$\pcReady$' between reads and writes, and
  `$\pcCommitted$' after the end of the transaction (i.e., when it has
  committed).  Since operations of different transactions may execute
  concurrently, the abstract specification splits executing an
  external operation into several steps, including an {\em invocation}
  and a {\em response}. For example, for \tmread, the external step
  $\readInv{t}{a}$ represents the invocation when reading from
  address $a$, and $\readResp{t}{v}$ represents a read returning
  with value $v$.  In between, an STM implementing TMS2 must at some
  time determine the value it reads. In TMS2 this is represented by
  the internal step $\doRead{t}{a, n}$, which computes $v$ by
  setting $\vstatus{t}$ to $readResp(v)$. The internal actions of TMS2
  (those prefixed by {\tt Do}) correspond to the points at which
  operations ``take effect''.}

Like opacity, TMS2 guarantees that transactions satisfy two critical
requirements: ({\it R1}) all reads and writes of a transaction work
with a \emph{single consistent memory snapshot} that is the result of
all previously committed transactions, and ({\it R2}) the \emph{real-time
  order} of transactions is preserved.

To ensure ({\it R1}), the state of TMS2 includes $\lseq \vmems(0),
\ldots $ $ \vmems(\mathit{maxIdx}) \rseq$, which is a sequence of all
possible memory snapshots.  Initially the sequence consists of one
element, the initial memory $\vmems(0)$. Committing writer transactions
append a new memory $\mathit{newMem}$ to this sequence
(cf. $\doCommitWriter{t}$), by applying the writes of the transaction
to the last element $\vmems(\mathit{maxIdx})$.  To ensure that the
writes of a transaction are not visible to other transactions before
committing, TMS2 uses a {\em deferred update} semantics: writes are
stored locally in the transaction $t$'s write set $\vwrset{t}$ and
only published to the shared state when the transaction commits. Note
that this does not preclude implementations with eager writes (like
\tmlcga).  However, to ensure opacity, such eager implementations must
guarantee that writes are not observable until after the writing
transaction has committed.

All reads in TMS2 must be consistent (i.e., occur from a single memory
snapshot), therefore each transaction $t$ keeps track of all its reads
from memory in a read set $\vrdset{t}$.  A read of address $a$ by
transaction $t$ checks that either $a$ was previously written by $t$
itself ({\bf then} branch of $\doRead{t}{a}$), or that all values
read so far, including $a$, are from the same memory snapshot $n$,
where $\mathit{beginIdx}_t \leq n\leq \mathit{maxIdx}$ (predicate
$\vidx{t}{n}$ from the precondition, which must hold in the
{\bf else} branch).  In the former case the value of $a$ from
$\vwrset{t}$ is returned, and in the latter the value from $\vmems(n)$
is returned and the read set is updated. The read set of $t$ is also
validated when a transaction commits (cf. $\doCommitReadOnlyKW{t}$ and
$\doCommitWriter{t}$). Note that when committing, a read-only
transaction may read from a memory snapshot older than
$\vmems(\mathit{maxIdx})$, but a writing transaction must ensure that
all reads in its read set are from most recent memory (i.e. %
\emph{latestMem} $\vmems(\mathit{maxIdx})$), since its writes will
update the memory sequence with a new snapshot.

To ensure ({\it R2}), if a transaction $t'$ commits before transaction
$t$ starts, then the memory that $t$ reads from must include the
writes of $t'$. Thus, when starting a transaction
(cf. $\beginInv{t}$), $t$ saves the current last index of the memory
sequence, $\mathit{maxIdx}$, into a local variable $\mathit{beginIdx}_t$. When
$t$ performs a read, the check $\vidx{t}{n}$ ensures that that the
snapshot $\vmems(n)$ used has $\mathit{beginIdx}_{t} \le n$, which implies
that the writes of $t'$ are included.\smallskip





\begin{theorem}
  \label{thm:tml-opaque}
  \tmlcga is opaque.
\end{theorem}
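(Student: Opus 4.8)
The plan is to prove \refthm{thm:tml-opaque} by exhibiting a forward simulation from \tmlcga (viewed as an IOA $C$) to the TMS2 specification $A$ of \reffig{fig:tms2}, and then invoking the cited results that TMS2 is opaque \cite{LLM12} together with the fact that refinement established via forward simulation (\refdef{def:for-sim}) preserves traces, hence opacity. The first step is to render \tmlcga as an IOA: states record $\vglb$, $\vmem$, and, for each transaction $t$, the local $\vloc{t}$ together with a program-counter component $pc_t$ tracking whether $t$ is not started, ready, mid-read, mid-write, mid-commit, committed or aborted (so that the atomic bodies of \reflst{alg:tml-cga} become a handful of transitions: an invocation setting $pc_t$, an internal ``do'' step performing the atomic effect, and a matching response or abort).

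Next I would set up the simulation relation $R$. The external-action interface is forced: invocations and responses of $C$ must map to the identically-labelled external actions of TMS2, so $R$ must track $\vstatus{t}$ in lockstep with $pc_t$ (e.g.\ $pc_t = \pcReady$ iff $\vstatus{t} = \pcReady$, an in-progress read in $C$ corresponds to $\vstatus{t} = \pcDoRead(a)$ or $\pcReadResp(v)$, etc.). The substantive content of $R$ is the correspondence on data: when no writing transaction is live (i.e.\ $\mathit{even}(\vglb)$), the single memory $\vmem$ of \tmlcga must equal \emph{latestMem} $\vmems(\mathit{maxIdx})$; when a writer $t$ is live ($\mathit{odd}(\vglb)$, so $\vloc{t}$ odd), $\vmem$ is \emph{latestMem} with $t$'s writes already applied, while $\vwrset{t}$ holds exactly those writes and $\vmems(\mathit{maxIdx})$ still holds the pre-write snapshot. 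The key invariants of \tmlcga needed here are standard for TML: at most one transaction has $\vloc{t}$ odd; if $\vloc{t} = \vglb$ then $t$ has observed no intervening writer; $\vglb$ is even iff no writer is live. For the real-time-order part, $R$ must relate the instant a transaction ``commits'' in \tmlcga to the snapshot index it would carry in TMS2: a committing writer's $\mathit{newMem}_t$ is appended, and a transaction $t$ that later begins reads $\mathit{beginIdx}_t = \mathit{maxIdx}$, so $\vidx{t}{n}$ will force it onto a snapshot including all writes committed before it began — this is where the single-index $n$ witnessing each read and commit of TMS2 gets supplied from the \tmlcga side (for a read under $\vglb = \vloc{t}$, one takes $n = \mathit{maxIdx}$, using the invariant that $\vmem$ then equals the latest snapshot and that $\mathit{beginIdx}_t \le \mathit{maxIdx}$).

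I would then discharge the three clauses of \refdef{def:for-sim}. Initialisation is immediate. External step correspondence: each \tmlcga invocation/response/abort maps to the same-named TMS2 action, and the preconditions match because the $pc$–$\vstatus{}$ correspondence in $R$ guarantees enabledness. The main work is internal step correspondence for the ``do'' steps of \aread, \awrite, \acommit: a successful \doRead matches $\doRead{t}{a,n}$ with $n = \mathit{maxIdx}$, using the invariant that reads seen so far are all consistent with \emph{latestMem}; a first \awrite that increments $\vglb$ matches $\doWrite{t}{a,v}$ updating $\vwrset{t}$ while $\vmem$ eagerly changes but $\vmems$ does not (so $R$'s ``odd'' case is entered); a commit of a writer matches $\doCommitWriter{t}$, which appends $\mathit{newMem}_t = \latestMem \oplus \vwrset{t}$ — and crucially this equals the already-updated $\vmem$, so $R$ is re-established with $\vglb$ back even; a read-only commit matches $\doCommitReadOnly{t}{n}$ with $n = \mathit{maxIdx}$. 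Aborts map to $\abortResp{t}$ and discard local state harmlessly.

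The main obstacle I anticipate is the $\doCommitWriter{t}$ precondition $\vrdset{t} \subseteq \latestMem$: in TMS2 a writing transaction must have all its prior reads consistent with the \emph{most recent} memory, not merely some snapshot. Establishing this from \tmlcga requires showing that, for a transaction that has both read and written, no \emph{other} transaction committed a write between $t$'s reads and $t$'s own first write — which in \tmlcga follows because $t$'s first write performed a successful $\mathit{cas}$ on $\vglb$ from the very value $\vloc{t}$ it had been reading with, so $\vglb$ did not change in between; turning this operational fact into an inductive invariant linking $\vrdset{t}$ to \emph{latestMem} across the appropriate window is the delicate part of the argument. A secondary subtlety is correctly handling subsequent writes within the same transaction (which in \tmlcga skip the $\mathit{cas}$ because $\vloc{t}$ is already odd) so that they match $\doWrite{t}{a,v}$ without disturbing the ``odd'' case of $R$. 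Once these invariants are in place, the remaining cases are routine, and the cited mechanised proof that TMS2 is opaque \cite{LLM12} yields opacity of \tmlcga.
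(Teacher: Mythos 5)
Your overall route is the paper's route: render \tmlcga as an IOA, build a forward simulation into TMS2 with a step-correspondence for the internal actions, relate the eager \tmlcga store to $\latestMem$ with the live writer's $\vwrset{t}$ applied, and then appeal to the mechanised opacity of TMS2. However, there is one concrete step that fails as you have set it up: the witness index for a \emph{read-only} commit. You propose matching a read-only \acommit with $\doCommitReadOnly{t}{n}$ for $n = \mathit{maxIdx}$. In \tmlcga, a read-only transaction commits unconditionally (its \acommit does not test $\vglb = \vloc{t}$), so it may commit \emph{after} another writer has committed: e.g.\ $t$ reads $x$ while $\vglb = \vloc{t}$, then a writer updates $x$ and commits (appending a new snapshot in TMS2), and only then does $t$ invoke \acommit. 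At that point $\vrdset{t} \not\subseteq \latestMem$, so $\vidx{t}{\mathit{maxIdx}}$ is false and $\doCommitReadOnly{t}{\mathit{maxIdx}}$ is not enabled — the internal step correspondence breaks exactly in the case that makes opacity harder than linearizability (read-only transactions serialising against an older snapshot).

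The fix is what the paper's simulation does: tie each transaction to the snapshot index $\lfloor \vloc{t}/2 \rfloor$ rather than to $\mathit{maxIdx}$, maintaining as part of the relation that every read of $t$ is consistent with $\vmems(\lfloor \vloc{t}/2 \rfloor)$ and that $\mathit{beginIdx}_t \le \lfloor \vloc{t}/2 \rfloor \le \mathit{maxIdx}$. For reads and for writer commits your choice coincides with this one (when $\vglb = \vloc{t}$ one has $\lfloor \vloc{t}/2 \rfloor = \mathit{maxIdx}$, which also discharges the $\vrdset{t} \subseteq \latestMem$ precondition of $\doCommitWriter{t}$ directly, without the cas-style argument you anticipate as the main obstacle — in the coarse-grained \awrite the test $\vglb = \vloc{t}$ is atomic with the update), but for the read-only commit the older index is essential. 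With that change your argument aligns with the paper's proof.
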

\begin{proof}
  We prove forward simulation between the IOA representation of
  \tmlcga and the TMS2 specification.  To construct the simulation
  relation between \tmlcga and TMS2, we start by defining a (partial)
  \emph{step-correspondence} function $\mathit{sc}$, mapping the
  internal actions of \tmlcga to the internal actions of TMS2. We let
  $\undef$ denote an undefined value.
  \begin{align*}
    \mathit{sc}(\aread_t(a))
    &=
      \begin{array}[t]{@{}l@{}}
        \text{if }\vloc{t} = \vglb \text{ then }  \doRead{t}{a,
        \lfloor \vloc{t}/2 \rfloor} \\
        \text{else } \undef
      \end{array}\\
    \mathit{sc}(\awrite_t(a, v))
    &=
      \begin{array}[t]{@{}l@{}}
        \text{if }\vloc{t} = \vglb \text{ then } \doWrite{t}{a, v} \\
        \text{else } \undef
      \end{array}\\
    \mathit{sc}(\acommit_t)
    &=
      \begin{array}[t]{@{}l@{}}
        \text{if } even(\vloc{t}) \text{ then } \\
        \qquad \doCommitReadOnly{t}{\lfloor \vloc{t}/2 \rfloor}\\
        \text{else }  \doCommitWriter{t}
      \end{array}
  \end{align*}
  In addition, we define a global relation between the concrete (i.e.,
  \tmlcga) and abstract (i.e., TMS2) states. This relation states that
  the concrete store, $\vmem$, is equal to the latest store in TMS2,
  with the all the concrete transaction's writes applied to it. This
  is necessary, because TML is an eager algorithm where writes are
  applied immediately, while TMS2 has an write set wherein writes are
  cached. Furthermore, there is a per-transaction simulation relation
  which states that for each transaction $t$, each of the following
  holds. These relate variables $\vloc{t}$ and $\vglb$ from \tmlcga to
  variables of TMS2.
  \begin{itemize}
  \item $\vloc{t}$ is even iff in the write set of TMS2, $\vwrset{t}$, is
    empty.
  \item If $\vloc{t}$ is odd, then in TMS2, $t$ is the currently
    active writer. Furthermore, $\vglb$ is equal to $\vloc{t}$.
  \item The maximum index of $\vmems$ (the sequence of TMS2 stores) is
    less than or equal to $\lfloor\vloc{t}/2\rfloor$.
  \item Each read of $t$ (from the \tmlcga store $mem$) is consistent
    from with the store in $\vmems$ at position
    $\lfloor \vloc{t}/2 \rfloor$.
  \end{itemize}

  The step correspondence function, global relation, and per
  transaction relation are combined into the overall simulation
  relation. The proof is fully mechanised in Isabelle \cite{Web}.
\end{proof}

\subsection{Linearizability against coarse-grained abstraction}
\label{sec:line-against-coarse}

Having established opacity of \tmlcga, we can now focus on
establishing linearizability between TML and \tmlcga, which by
\refthm{thm:opa-run} will ensure opacity of TML. We are free to use
any of the available methods from the literature to prove
linearizability \cite{DongolD15}.  We opt for a model-checking
approach (as opposed to full verification), which provides assurances
of linearizability for finite models. Part of our motivation is to
show that model checking indeed becomes a feasible technique for
verifying opacity, leveraging one of the many methods that have been
developed over the years
\cite{Liu0L0ZD13,CernyRZCA10,BurckhardtDMT10,VechevYY09}. It is also
possible to use static analysis tools (e.g., \cite{Vafeiadis10}) or to
perform full verification (e.g., \cite{SchellhornDW14}).

We use the PAT model checker \cite{SunLDP09}, which enables one to
verify trace refinement (in a manner that guarantees linearizability)
without having to explicitly define invariants, refinement relations,
or linearization points of the algorithm. Interestingly, the model
checker additionally shows that, for the bounds tested, TML is
\emph{equivalent} to \tmlcga, i.e., both produce exactly the same set
of observable traces (see \reflem{lem:tml-eq-cga} below).

\begin{figure}[t]
$\begin{array}{l}
\abegin(t) = inv.\tmbegin.t \rightarrow \abegin\texttt{Loop}(t);\\[0.75em]
\abegin\texttt{Loop}(t) =\\
\quad \mathtt{ifa} (\mathit{even}(\vglb)) \{\\
\quad\quad \mathtt{tau}\{\vloc{}[t] = \vglb;\} \rightarrow ret.{\tmbegin}.t \rightarrow \mathtt{Skip}\\
\quad \}\ \mathtt{else}\ \{\\
\quad\quad \mathtt{tau} \rightarrow \abegin\texttt{Loop}(t)\\
\quad \};
\end{array}$
\caption{\tmlcga $\abegin_t$ procedure in PAT}
\label{fig:pat}
\end{figure}

PAT allows one to specify algorithms using a CSP-style syntax
\cite{Hoare78}. However, in contrast to conventional CSP, events in
PAT (including $\tau$ events) are arbitrary programs assumed to
execute atomically---as such they can directly modify shared state,
and do not have to communicate via channels with input/output events
as in other CSP based model checkers (c.f., FDR3~\cite{fdr3}). This enables
our 
transactional memory algorithms to be implemented very naturally in
PAT. As an example, we give the PAT encoding of the $\abegin_t$
operation from \tmlcga in \reffig{fig:pat}, where $inv.\tmbegin.t$ and
$ret.{\tmbegin}.t$ are observable events corresponding to invoking and
returning from a begin operation. Internal actions are specified using
the $\mathtt{tau}$ keyword, and $\mathtt{ifa}$ is a built-in keyword
that evaluates the test and executes the next event as a single atomic
step. Thus, $\abegin(t)$ fires an (observable) $inv.\tmbegin.t$ event
then executes as $\abegin\texttt{Loop}(t)$. If $glb$ is even, it fires
the observable $ret.{\tmbegin}.t$ event and terminates by executing
$\mathtt{Skip}$, otherwise it retries $\abegin\texttt{Loop}(t)$.

Note that interleaving may occur between the external events
$inv.\tmbegin.t$ and $ret.\tmbegin.t$. However, because the main
action occurs as a single atomic step, this encoding corresponds to
Lynch's \emph{canonical automata} \cite{Lynch96}, which are guaranteed
to be linearizable with respect to the atomic \tmlcga where the
invocation, main action and atomic object are executed as a single
atomic step. Such canonical encodings are also used by Doherty et
al. for verifying concurrent data structures \cite{DGLM04} and is the
prescribed method of proving linearizability in PAT~\cite{Liu0L0ZD13}.

The overall behaviour of the algorithm can be
specified as the interleaving of $N$ transactions where each
transaction begins and then does some number of reads and writes
before either committing successfully or aborting.
\begin{align*}
&\mathtt{Transaction}(t) = \abegin(t);\mathtt{ATXReadWrites}(t);\\
&\mathtt{TML\_CGA}() = \mathtt{|||}t:{0..N-1}@\mathtt{Transaction}(t);
\end{align*}

Once both the coarse-grained and fine-grained algorithms have been
implemented within PAT, trace equivalence (and thus linearizability)
can be checked automatically via refinement using:
\begin{align*}
&\texttt{\#assert TML$()$ refines TML-CGA$();$}\\
&\texttt{\#assert TML-CGA$()$ refines TML$();$}
\end{align*}
Obviously, this does not give us a full proof that TML linearizes to
TML-CGA, as model-checking of course only checks up to a certain
number of transactions with a limited amount of memory.

We thus obtain the following lemma, where constant $\mathit{SIZE}$
denotes the size of the memory (i.e., number of addresses) and
constant $V$ for the possible values in these addresses. The proof is
via model checking using PAT \cite{Web}.

\begin{lemma}
  \label{lem:tml-eq-cga}
  For bounds $N = 3$, $\mathit{SIZE} = 4$, and $V = \{0,1,2,3\}$, as
  well as $N = 4$, $\mathit{SIZE} = 2$, and $V = \{0,1\}$, \tml is
  equivalent to \tmlcga.
\end{lemma}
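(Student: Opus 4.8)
The statement to be established is Lemma~\ref{lem:tml-eq-cga}: for the specified parameter bounds, \tml is equivalent to \tmlcga, i.e., each refines the other. Since the lemma is explicitly asserted to be proved by model checking, the plan is not a hand proof but a description of how the mechanised check is set up and why its output justifies the claim.

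The first step is to give faithful PAT encodings of both algorithms as CSP-style processes, following the canonical-automaton pattern already illustrated for $\abegin_t$ in \reffig{fig:pat}: each operation fires an observable invocation event, performs its body, and fires an observable response event, with the ``main action'' of each \tmlcga operation executed inside a single \texttt{ifa}/\texttt{tau} atomic block, while the \tml operations are encoded with their fine-grained steps (the \texttt{cas}, the loop in \tmbegin, the separate read of \vglb\ in \tmread) left interleavable. The whole system in each case is the interleaving $\mathtt{|||}\,t:0..N-1\,@\,\mathtt{Transaction}(t)$, with the shared store \vglb\ and memory \vmem\ declared over $\mathit{SIZE}$ addresses ranging over $V$. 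Observable events must include the input/output values of reads and writes, so that trace equivalence at this level coincides with history equivalence in the sense of \refdef{def:linearized}.

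The second step is to invoke PAT's built-in refinement checker on the two assertions \texttt{TML refines TML-CGA} and \texttt{TML-CGA refines TML}. A positive answer to the first, combined with the fact that the \tmlcga encoding is a canonical automaton, yields linearizability of \tml with respect to \tmlcga; a positive answer to the second gives the reverse trace inclusion, hence equivalence. For each of the two parameter settings ($N=3$, $\mathit{SIZE}=4$, $V=\{0,1,2,3\}$ and $N=4$, $\mathit{SIZE}=2$, $V=\{0,1\}$) the checker is run and reports that both refinements hold, which is exactly the content of the lemma.

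The main obstacle here is not mathematical but methodological and practical. First, one must argue that the PAT encodings are genuinely faithful to Listings~\ref{alg:tml} and~\ref{alg:tml-cga} --- in particular that the granularity of atomic steps in the \tml encoding matches the intended memory model (each shared-variable access and the \texttt{cas} atomic, everything else interleavable) and that no observable behaviour is accidentally hidden or spuriously exposed; this is a modelling-fidelity argument rather than a proof. Second, the result is inherently bounded: model checking certifies equivalence only up to the stated $N$, $\mathit{SIZE}$, and $V$, so the lemma is stated with exactly those bounds and no claim of full generality is made (the unbounded opacity guarantee comes instead from \refthm{thm:tml-opaque} via \refthm{thm:opa-run}). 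Third, state-space size grows quickly with $N$ and $V$, so the choice of the two complementary parameter regimes --- one with more values and fewer transactions, one with more transactions and fewer values --- is what makes the check tractable; establishing that these regimes suffice for confidence is a pragmatic judgement, and pushing to larger bounds would be the expected point of difficulty.
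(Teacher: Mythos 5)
Your proposal matches the paper's own approach: the lemma is established exactly by encoding \tml and \tmlcga in PAT (with \tmlcga as a canonical automaton, as in \reffig{fig:pat}), checking both refinement assertions automatically for the stated bounds, and reading off bounded trace equivalence. The caveats you raise about modelling fidelity and the inherently bounded nature of the result are the same ones the paper itself acknowledges, so nothing is missing.
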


\section{The NORec algorithm}
\label{sec:norec-algorithm}

\begin{algorithm}[t]
  \caption{NORec pseudocode}
  \label{alg:norec}

\begin{varwidth}[t]{0.45\textwidth}
\begin{algorithmic}[1]
  \Procedure{$\tmbegin_t$}{}
  \State {\bf do}\ $\vloc{t}$ $\gets$ $\vglb$
  \State {\bf until} {$\mathit{even}(\vloc{t})$} 
  \EndProcedure\medskip

  \Procedure{$\Validate_t$}{}
  \While{true}
  \State $\vtime{t}$ $\gets$ $\vglb$\label{nr-val-restart}
  \If {$\mathit{odd}(\vtime{t})$} 
   \State
  {\bf goto} \ref{nr-val-restart}
  \EndIf
  \For {$a \mapsto v \in \vrdset{t}$} \If {$mem(a) \neq v$} \State \textbf{abort}
  \EndIf
  \EndFor
  \If {$\vtime{t} = \vglb$}
  \State \Return $\vtime{t}$
  \EndIf
  \EndWhile
  \EndProcedure
  \algstore{norec-ctr}
\end{algorithmic}
\end{varwidth}
\hfill
\begin{varwidth}[t]{0.45\textwidth}
  \begin{algorithmic}[1]
    \setcounter{ALG@line}{20}

    \Procedure{$\tmwrite_t$}{$a, v$}
    \State $\vwrset{t}$ $\gets$
    \Statex \qquad $\vwrset{t} \oplus \{a \mapsto v\}$
    \EndProcedure \medskip

    \Procedure{$\tmread_t$}{$a$}
    \If {$a \in \mathit{dom}(\vwrset{t})$}
    \State \Return $\vwrset{t}(a)$
    \EndIf
    \State $v_t$ $\gets$ $\vmem(a)$
    \While {$\vloc{t}$ $\neq$ $\vglb$}
    \State $\vloc{t}$ $\gets$ $\Validate_t$
    \State $v_t$ $\gets$ $\vmem(a)$
    \State $\vrdset{t}$ $\gets$
    \Statex \quad\qquad $\vrdset{t} \oplus \{a \mapsto v_t\}$
    \State \Return $v_t$
    \EndWhile
    \EndProcedure
\end{algorithmic}
\end{varwidth}
  \begin{algorithmic}[1]
    \medskip

    \algrestore{norec-ctr}

    \Procedure{$\tmcommit_t$}{}
    \If {$\vwrset{t}$ = $\emptyset$}  \Return
    \EndIf
    \While {$!\mathit{cas}(\vglb, \vloc{t}, \vloc{t} + 1)$}
    \State $\vloc{t}$ $\gets$ $\Validate_t$
    \EndWhile
    \For {$a \mapsto v \in \vwrset{t}$}
    \State $\vmem(a)$ $\gets$ $v$
    \EndFor
    \State $\vglb$ $\gets$ $\vloc{t} + 2$
    \EndProcedure
  \end{algorithmic}
\end{algorithm}

In this section, we show that the method scales to more complex
algorithms. In particular, we verify the NORec algorithm by
Dalessandro \emph{et al.} \cite{DBLP:conf/ppopp/DalessandroSS10} (see
\reflst{alg:norec}), which is one of the best performing STMs that
provides both privatisation and publication safety.

The proof steps for NORec proceeds as with TML. Namely, we construct a
coarse-grained abstraction, \noreccga (see \reflst{alg:norec-cga}),
verify that \noreccga is opaque, then show that NORec linearizes to
\noreccga. As with TML, we do not perform a full verification of
linearizability, but rather, model check the linearizability part of
the proof using PAT.

\newcommand{\tmsv}{TMS3\xspace}

The proof that \noreccga is opaque proceeds via forward simulation
against a variant of TMS2 (\tmsv), which does not require read-only
transactions to validate during their commit, matching the behaviour
of \norec more closely. In particular, \tmsv is identical to TMS2
except that $\doCommitReadOnly{t}{n}$ in the TMS2 transition relation
is replaced by
\begin{align*}
  \action{\doCommitReadOnlyKW{t}}
  {\vstatus{t} = \pcDoCommit \land
  \dom(\vwrset{t}) = \emptyset}
  {\vstatus{t} := \pcCommitResp}
\end{align*}
where $\vidx{t}{n}$ is no longer required in the
precondition. Making this change to TMS2 greatly simplifies the
simulation relation for \noreccga. We have verified within Isabelle
that \tmsv is equivalent to the standard definition of TMS2 from
\reffig{fig:tms2}.

\begin{theorem}
  \tmsv is equivalent to TMS2.
\end{theorem}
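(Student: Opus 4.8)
The plan is to prove the two trace inclusions TMS2 $\sqsubseteq$ \tmsv and \tmsv $\sqsubseteq$ TMS2 separately, in each direction exhibiting a forward simulation (\refdef{def:for-sim}) whose state relation is simply the identity. This is natural because the two automata have exactly the same state space and every action is common to both, with the sole exception that TMS2's parameterised internal action $\doCommitReadOnly{t}{n}$ is replaced in \tmsv by the unparameterised internal action $\doCommitReadOnlyKW{t}$, which keeps the same effect $\vstatus{t} := \pcCommitResp$ but drops the conjunct $\vidx{t}{n}$ from the precondition. Initialisation and external-step correspondence are therefore immediate in both directions, so the argument reduces entirely to matching the read-only commit step.

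For the direction TMS2 $\sqsubseteq$ \tmsv, a TMS2 transition labelled $\doCommitReadOnly{t}{n}$ is enabled only when $\vstatus{t} = \pcDoCommit$, $\dom(\vwrset{t}) = \emptyset$, and $\vidx{t}{n}$ hold; this strictly implies the \tmsv precondition of $\doCommitReadOnlyKW{t}$, which has the same effect, so the identity relation is preserved. Both actions are internal, so this discharges the internal-step obligation; every other action is matched by itself. Hence the identity is a forward simulation and every trace of TMS2 is a trace of \tmsv.

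The harder direction is \tmsv $\sqsubseteq$ TMS2, where a \tmsv step $\doCommitReadOnlyKW{t}$ must be matched by $\doCommitReadOnly{t}{n}$ for some concrete witness $n$, i.e. we must exhibit $n$ with $\mathit{beginIdx}_t \le n \le \mathit{maxIdx}$ and $\vrdset{t} \subseteq \vmems(n)$. This is where the real content sits. The key is an invariant of \tmsv (equivalently of TMS2): for every transaction $t$, if $\dom(\vrdset{t}) \neq \emptyset$ then there exists $n$ with $\vidx{t}{n}$; together with the elementary invariant $\mathit{beginIdx}_t \le \mathit{maxIdx}$ (since $\mathit{beginIdx}_t$ is frozen to $\mathit{maxIdx}$ at $\beginInv{t}$ and $\mathit{maxIdx}$ is non-decreasing). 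The invariant is proved by induction over reachable states. It is vacuous initially (all read sets empty). It is (re-)established by the else-branch of $\doRead{t}{a,n}$, where the precondition gives $\vidx{t}{n}$ for the read set before $a$ is added, and the assignment $v := \vmems(n)(a)$ guarantees $\{a\mapsto v\} \subseteq \vmems(n)$, so the extended read set $\vrdset{t}\oplus\{a\mapsto v\}$ is still contained in $\vmems(n)$ — the same witness $n$ still works. Preservation under every other action is routine: the then-branch of $\doRead$ and the writes ($\doWrite$, $\awrite$-style actions) leave read sets untouched, and $\doCommitWriter{t'}$ only appends a snapshot and increments $\mathit{maxIdx}$ — it never mutates $\vmems(n)$ for an existing index $n$, nor any $\vrdset{t}$ or $\mathit{beginIdx}_t$ — so an existing witness remains within range and remains a superset of the read set.

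Given the invariant, the matching step is immediate: $\doCommitReadOnlyKW{t}$ fires only with $\vstatus{t} = \pcDoCommit$ and $\dom(\vwrset{t}) = \emptyset$; if $\vrdset{t} = \emptyset$ take $n := \mathit{maxIdx}$, otherwise take the $n$ supplied by the invariant, and in either case $\doCommitReadOnly{t}{n}$ is enabled in TMS2 with the same effect, preserving the identity relation. External steps and the remaining internal steps are matched verbatim, completing the forward simulation and hence the second inclusion. The main obstacle is thus the invariant argument — in particular checking that appending a snapshot in $\doCommitWriter$ cannot invalidate a read-set witness of another transaction — which, like the rest of the argument, has been mechanised in Isabelle.
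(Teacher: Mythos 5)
Your proposal is correct and takes essentially the same route as the paper: two forward simulations with the identity state relation, trivial in the direction where TMS2's stronger read-only-commit precondition implies \tmsv's, and, in the other direction, an auxiliary invariant supplying the witness index $n$ (the paper phrases it as ``every in-flight transaction has some valid index'', you phrase it for transactions with non-empty read sets together with $\mathit{beginIdx}_t \le \mathit{maxIdx}$ --- an inessential difference). The only caveat is notational: your use of $\sqsubseteq$ is reversed relative to the paper's convention (there $A \sqsubseteq C$ means every trace of $C$ is a trace of $A$), though your prose states the intended trace inclusions unambiguously.
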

\begin{proof}
  We first prove that \tmsv refines TMS2 (soundness) by forward
  simulation. The proof is straightforward, as the simulation relation
  between the two automata is the equality relation, except in the
  case where we must show that $\doCommitReadOnlyKW{t}$ in \tmsv\
  simulates $\doCommitReadOnly{t}{n}$ in TMS2. In this case we must
  provide a valid index $n$ to show that \tmsv\ simulates TMS2. In
  order to accomplish this, we prove an additional invariant to \tmsv\
  which states that there is always some valid index in the stores
  list for any \emph{in-flight transaction}, i.e., a transaction that
  has completed its begin operation. Of course, this invariant would
  also hold for TMS2 itself. Next, we show that TMS2 refines \tmsv
  (completeness), which also proceeds via forward simulation. Once
  again, we use equality as the forward simulation relation, and no
  new invariants need to be introduced. As with our other proofs, the
  Isabelle code may be downloaded~\cite{Web}.
\end{proof}

We are now ready to prove opacity of \noreccga.
\begin{theorem}
  \noreccga is opaque.
\end{theorem}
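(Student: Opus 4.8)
The plan is to prove opacity of \noreccga by exhibiting a forward simulation, in the sense of \refdef{def:for-sim}, from the IOA encoding of \noreccga to the \tmsv specification --- following exactly the pattern of the proof of \refthm{thm:tml-opaque}. Such a simulation shows that every trace of \noreccga is a trace of \tmsv; since \tmsv is equivalent to TMS2 by the previous theorem, every such trace is also a trace of TMS2, and TMS2 admits only opaque traces, so \noreccga is opaque. Working against \tmsv rather than TMS2 is what keeps the argument manageable: \norec (hence \noreccga) does not re-validate the read set of a read-only transaction at commit, so its committing read-only step lines up with $\doCommitReadOnlyKW{t}$ of \tmsv, whose precondition merely requires $\dom(\vwrset{t}) = \emptyset$ and never forces us to produce a witness snapshot index at that point.

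First I would define a partial step-correspondence function $\mathit{sc}$ from the internal (atomic body) actions of \noreccga to the internal actions of \tmsv: a non-aborting $\aread_t(a)$ maps to $\doRead{t}{a, \mathit{maxIdx}}$, $\awrite_t(a,v)$ maps to $\doWrite{t}{a,v}$, and $\acommit_t$ maps to $\doCommitReadOnlyKW{t}$ when $\vwrset{t}$ is empty and to $\doCommitWriter{t}$ otherwise, while the aborting branches of a read and of commit-time validation are mapped to $\undef$ and matched by $\abortResp{t}$. The key point, and the main structural difference from \tmlcga, is the choice of read index: because \norec uses \emph{deferred updates}, the concrete store $\vmem$ always coincides with $\latestMem = \vmems(\mathit{maxIdx})$, so every value a read returns is drawn from the most recent snapshot and $n = \mathit{maxIdx}$ is the natural index --- in contrast to \tmlcga, where the eager store forced $n = \lfloor \vloc{t}/2 \rfloor$ and a global relation that re-applied cached writes.

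Next I would build the simulation relation from: (i) the global relation $\vmem = \latestMem$; (ii) the literal identifications $\vwrset{t}$ of \noreccga equals $\vwrset{t}$ of \tmsv and $\vrdset{t}$ of \noreccga equals $\vrdset{t}$ of \tmsv (both being address-to-value maps, the correspondence is equality, not the up-to-cache relation needed for \tmlcga); and (iii) a version discipline relating $\vglb$ and $\vloc{t}$ to the \tmsv indices, namely $\vglb/2 = \mathit{maxIdx}$ and $\mathit{beginIdx}_{t} \le \lfloor \vloc{t}/2 \rfloor \le \mathit{maxIdx}$. The crucial invariant is a read-consistency property: whenever $\vloc{t} = \vglb$ (the transaction is synchronised with the current version) we have $\vrdset{t} \subseteq \latestMem$. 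This is preserved by begin (empty read set), by a fast-path read (the new value comes from $\vmem = \latestMem$), by a successful value-based validation (which rereads the entire read set against $\vmem = \latestMem$ before setting $\vloc{t} := \vglb$), and vacuously by another transaction's commit (which makes $\vglb \neq \vloc{t}$ for everyone who has not re-synchronised). With this invariant the preconditions of the matched \tmsv actions discharge: $\vidx{t}{\mathit{maxIdx}}$ holds at a non-aborting read (using $\mathit{beginIdx}_{t} \le \mathit{maxIdx}$ together with $\vrdset{t} \subseteq \latestMem$), and $\vrdset{t} \subseteq \latestMem$ holds whenever a writer commits, since in \noreccga that step is reached only after the $\mathit{cas}$ observes $\vglb = \vloc{t}$.

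I expect the read-consistency invariant to be the main obstacle, precisely because \norec validates \emph{by value} rather than by version number: unlike in \tmlcga, where $\vloc{t} = \vglb$ alone pins down the abstract snapshot, here one must argue that a successful re-read of the read set against current memory genuinely witnesses membership of $\vrdset{t}$ in a single \tmsv snapshot, and in particular that no ``ABA''-style stale re-validation can succeed. In the coarse-grained model this is tractable because a read (including its internal validation) executes atomically and compares against the current memory, which equals $\latestMem = \vmems(\mathit{maxIdx})$; combined with $\vglb/2 = \mathit{maxIdx}$ and $\vloc{t} \le \vglb$, the only snapshot a validation can certify is $\mathit{maxIdx}$ itself. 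The remaining proof obligations --- the invocation and response actions, the $\abortResp{t}$ action, and the book-keeping around $\mathit{beginIdx}_{t}$ --- follow the standard TMS2 pattern. As with the other opacity results in this paper, the simulation relation and supporting invariants would be mechanised in Isabelle \cite{Web}.
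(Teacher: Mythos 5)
Your overall strategy is the paper's: a forward simulation from \noreccga to \tmsv (then appealing to the equivalence of \tmsv and TMS2), with essentially the same step correspondence --- non-aborting reads mapped to a $\doRead{t}{a,n}$ at the latest snapshot, writes to $\doWrite{t}{a,v}$, commits split on emptiness of the write set, aborting branches to $\undef$ --- and the same core relation: $\vmem = \latestMem$, literal equality of the concrete and abstract read/write sets, and the snapshot index tracked by the number of successful commits (the paper introduces an auxiliary counter $N$ in the CGA with the invariant $N = \mathit{maxIdx}$; your direct use of $\mathit{maxIdx}$ is the same idea, made concrete-state-definable in their mechanisation).

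However, a substantial part of your argument is aimed at the wrong system. \noreccga (\reflst{alg:norec-cga}) has no $\vglb$, no $\vloc{t}$, no $\Validate$ loop and no $\mathit{cas}$: $\abegin_t$ is a no-op, and reads and commits validate purely by value via the guard $\vrdset{t} \subseteq \vmem$, executed atomically with the rest of the operation. So clause (iii) of your simulation relation ($\vglb/2 = \mathit{maxIdx}$, $\mathit{beginIdx}_t \le \lfloor \vloc{t}/2\rfloor \le \mathit{maxIdx}$) and your ``crucial'' read-consistency invariant keyed on $\vloc{t} = \vglb$, together with its preservation argument (fast-path reads, re-validation before setting $\vloc{t} := \vglb$), describe the fine-grained \norec, not the CGA you are simulating; as stated they cannot even be formulated over the state of \noreccga. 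They are also unnecessary: in \noreccga the non-aborting read and writer-commit branches carry $\vrdset{t}\subseteq\vmem$ in their guards, which together with $\vmem = \latestMem$ and the trivial fact $\mathit{beginIdx}_t \le \mathit{maxIdx}$ discharges $\vidx{t}{\mathit{maxIdx}}$ and the $\doCommitWriter{t}$ precondition directly, with no inductive version-synchronisation invariant. Likewise the ABA/stale-revalidation worry belongs to the linearizability step (\norec against \noreccga, handled by the PAT refinement check), not to this opacity proof --- keeping that concern out of the opacity argument is precisely the point of the reduction. Stripping the misplaced material leaves exactly the paper's (simpler) simulation relation, which suffices.
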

\begin{proof}

  As was the case for \tmlcga (see \refthm{thm:tml-opaque}), there is
  a simple correspondence between the internal actions of \noreccga
  and the internal actions of \tmsv given by a (partial) step
  correspondence function:
  \begin{align*}
    \mathit{sc}(\aread_t(a))
    &= \begin{array}[t]{@{}l@{}}
         \text{if } a \!\in\! \mathit{dom}(\mathit{wrSet_t}) \lor
         rdSet_t \subseteq mem \\
         \text{then } \doRead{t}{a, N}
         \\
         \text{else } \undef
       \end{array} \\
    \mathit{sc}(\awrite_t(a, v)) &= \doWrite{t}{a, v} \\
    \mathit{sc}(\acommit_t)
    &=
      \begin{array}[t]{@{}l@{}}
        \text{if } \mathit{wrSet}_t = \emptyset \text{ then }
        \doCommitReadOnlyKW{t} \\
        \text{else } \doCommitWriter{t}
      \end{array}
  \end{align*}
  where $N$ is the number of previous transactions that have
  successfully committed. This is implemented as an auxiliary variable
  in the concrete CGA that is incremented whenever it performs the
  \acommit\ action. The rest of the simulation relation is
  considerably simpler than for \tmlcga---we only require that the
  read and write set of the concrete and the abstract states are the
  same for each transaction, that the number of commits $N$ is equal
  to the maximum index in the \tmsv stores list and that the latest
  \tmsv store is equal to the concrete \noreccga store. With the
  simulation relation given by $\mathit{sc}$ and these properties, the
  simulation proof against \tmsv is relatively straightforward.
\end{proof}

\tmsv and \noreccga are quite similar in many respects. They both use
read and write sets in the same way, and write-back lazily during the
commit. The only additional information needed in the simulation is
keeping track of the number of successful commits in \noreccga. Thus,
the simulation relation and refinement proof is easier than the proof
between \tmlcga and TMS2.

\begin{algorithm}[t]
  \caption{\noreccga: Coarse-grained abstraction of NORec}
  \label{alg:norec-cga}

\begin{varwidth}[t]{0.6\textwidth}
\begin{algorithmic}[1]
  \Procedure{$\abegin_t$}{}
  \State \Return
  \EndProcedure\medskip

  \Procedure{\acommit}{$t$}
  \Atomic
  \If {$\vwrset{t} = \emptyset$}  \Return
  \ElsIf {$\vrdset{t} \subseteq \vmem$} $\vmem$ $\gets$ $\vmem \oplus \vwrset{t}$
  \Else\ \textbf{abort}
  \EndIf
  \EndAtomic
  \EndProcedure\medskip

  \Procedure{$\awrite_t$}{$a,v$}
  \State $\vwrset{t}$ $\gets$ $\vwrset{t} \oplus \{a \mapsto v\}$
  \EndProcedure\medskip

  \Procedure{$\aread_t$}{$a$}
  \Atomic
  \If {$a \in \mathit{dom}(\vwrset{t})$} \Return $\vwrset{t}$
  \ElsIf {$\vrdset{t} \subseteq \vmem$} \Return $\vmem(a)$
  \Else\ \textbf{abort}
  \EndIf
  \EndAtomic
  \EndProcedure
\end{algorithmic}
\end{varwidth}
\end{algorithm}

Next, we have the following lemma, which is proved via model checking
using PAT \cite{Web}.
\begin{lemma}
  \label{lem:norec-eq-cga}
  For bounds $N = 2$, $\mathit{SIZE} = 2$ and $V = \{0,1\}$,
  \norec is equivalent to \noreccga.
\end{lemma}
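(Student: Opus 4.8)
The statement to prove is \reflem{lem:norec-eq-cga}: for the stated finite bounds, \norec and \noreccga produce exactly the same set of observable traces, i.e. each refines the other. Since the parameters $N$, $\mathit{SIZE}$, and $V$ are fixed and small, the state space of both systems --- counting per-transaction local state ($\vloc{t}$, $\vtime{t}$, program counter, read set, write set) together with the shared $\vglb$ and $\vmem$ --- is finite, so the plan is to discharge both refinement assertions by explicit finite-state exploration in PAT, exactly as was done for \tml/\tmlcga in \reflem{lem:tml-eq-cga}. Concretely: (1) encode \noreccga in PAT's CSP-style syntax following the canonical-automata pattern of \reffig{fig:pat}, with each atomic block rendered via $\mathtt{ifa}$/$\mathtt{tau}$ so that the invocation, the main atomic effect, and the response are the only observable events and the effect happens in a single step; (2) encode \norec the same way, but now each line of \reflst{alg:norec} (the $\mathit{cas}$, the loop bodies of \textsc{Validate} and \tmread, the write-back loop in \tmcommit) becomes its own atomic event, so that fine-grained interleaving is exposed; (3) form the top-level process $\mathtt{|||}\,t{:}0..N{-}1\,@\,\mathtt{Transaction}(t)$ in each case, where $\mathtt{Transaction}(t)$ is a begin followed by a nondeterministic sequence of reads/writes on addresses drawn from $\mathit{SIZE}$ with values from $V$, then a commit-or-abort; (4) issue \texttt{\#assert NORec() refines NORec-CGA()} and \texttt{\#assert NORec-CGA() refines NORec()} and let PAT's refinement checker confirm both, yielding trace equivalence.

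The key subtlety, and the step I expect to be the main obstacle, is getting the \emph{granularity and atomicity of the PAT encoding exactly right} so that the model faithfully captures both algorithms. On the \noreccga side the \acommit\ and \aread\ bodies must genuinely execute as one atomic step (the $\mathtt{ifa}$/$\mathtt{atomic}$ construct guarantees this), and the canonical-automata argument already invoked for \tmlcga --- that such an encoding is linearizable with respect to the fully atomic specification --- transfers verbatim. The delicate part is \norec: one must be careful that $\vglb$, the read set, and the write set are modelled as genuinely shared/local respectively, that the $\mathit{cas}$ in \tmcommit\ is a single atomic read-modify-write, and --- crucially --- that the write-back loop over $\vwrset{t}$ followed by $\vglb \gets \vloc{t}+2$ is \emph{not} collapsed into one step, since it is precisely during this window that another transaction's \textsc{Validate} can observe a partially-written memory with $\vglb$ odd and restart. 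If the encoding erroneously atomicised this region, \norec would spuriously become indistinguishable from \noreccga for trivial reasons; conversely if \noreccga's \acommit\ were not atomic the first refinement would fail. Establishing that the chosen event boundaries correspond one-to-one with the pseudocode lines, and that no observable event is introduced or suppressed, is the real content here; once that is done the refinement check is mechanical.

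Two smaller points round out the plan. First, the bound choice ($N=2$, $\mathit{SIZE}=2$, $V=\{0,1\}$) is smaller than the bounds used for TML in \reflem{lem:tml-eq-cga}, reflecting the larger per-transaction state of \norec (read and write sets over two addresses, plus the \textsc{Validate} loop's extra local $\vtime{t}$); I would not attempt larger bounds but would note, as the paper does for TML, that this is a model-checking result and hence only an assurance for finite instances, not a full proof. Second, the ``$\subseteq mem$'' tests in \noreccga\ and the $mem(a)\neq v$ comparisons in \norec's \textsc{Validate} must be encoded against the \emph{same} shared memory array so that a read set validates in \noreccga\ exactly when the corresponding \textsc{Validate} succeeds in \norec; this correspondence is what makes the two directions of refinement go through, and it is worth a sentence of justification in the PAT model. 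With the encodings fixed, both \texttt{\#assert} directives are expected to succeed, giving the claimed equivalence; the full PAT source is available at \cite{Web}.
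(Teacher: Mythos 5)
Your proposal matches the paper's approach: the lemma is established exactly by encoding both \norec and \noreccga in PAT (in the same canonical-automata style used for \tml/\tmlcga) and automatically checking trace refinement in both directions for the stated bounds, with the PAT sources available at the cited artefact. Your additional remarks on encoding granularity are sensible elaborations of the same method rather than a different route.
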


Proving opacity directly, i.e., by showing \norec directly implements
\tmsv would be much trickier as we would need to concern ourselves
with the locking mechanism it employs during the commit to guarantee
that the write-back occurs atomically. However, this locking mechanism
is effectively only being used to guarantee linearizability of the
\norec commit operation, so it need not occur in the opacity proof.

Lesani \emph{et al.} directly verified opacity of NORec \cite{Lesani2012} via
simulation against the TMS2 specification. In comparison to our
approach, Lesani \emph{et al.} introduce several layers of intermediate
automata, performing the full simulation proof in a step-wise
manner. Each of layer adds additional complexity and design elements
of the NORec algorithm to the abstract TMS2 specification. A full
(in-depth) comparison of against this existing proof \cite{Lesani2012}
has however not been possible because these details are not publicly
available. In contrast, we have developed a simple coarse-grained
abstraction \noreccga that generates opaque traces, reducing the proof
of opacity for \norec to linearizability against \noreccga.

\section{Relaxed memory}
\label{sec:relaxed-memory}
We now demonstrate that our method naturally extends to reasoning
about opacity of TM implementations under relaxed memory. We will
focus on TSO in this Section, but our arguments and methods could be
extended to other memory models. Note that we cannot employ a data-race
freedom argument \cite{Adve:DRF} to show that TML or NOrec running on TSO are equivalent
to sequentially consistent versions of the algorithms. This is because
transactional reads can race with the writes of committing transactions
(this is true even when we consider the weaker {\em triangular-race freedom}
condition of \cite{triang-race-freedom}).
This racy behaviour is typical for software transactional memory
implementations.

There are two possibilities for verifying our TM algorithms on TSO. (1)
Leveraging a proof of opacity of the implementation under sequential
consistency then showing that the relaxed memory implementation
refines this sequentially consistent implementation. (2) Showing that
the implementation under relaxed memory linearizes to the
coarse-grained abstraction directly. This approach simply treats an
implementation executing under a particular memory model as an
alternative implementation of the CGA algorithm in question.

In this paper, we follow the second approach, which shows that model
checking linearizability of TSO implementations against a
coarse-grained abstraction is indeed feasible. We verify both TML and
\norec under TSO within the PAT model checker. To do this we encode a
buffer for each transaction where writes to main memory are cached, a
separate $\mathtt{Flusher}()$ process placed in parallel with the
transactions non-deterministically flushes buffers at random, as
shown:
\begin{align*}
&\mathtt{Flusher}() = []i:{0..(N-1)}@(\mathtt{Flush}(i);(\mathtt{Skip}[]\mathtt{Flusher}()));\\
&\mathtt{TML}() = (|||t:{0..N-1}@\mathtt{Transaction}(t))|||\mathtt{Flusher}();
\end{align*}
Writes by one transaction will not become visible to another
transaction until they are flushed. Due to the significantly increased
state-space of the model with these added buffers, checking the
relaxed-memory versions of the algorithms takes significantly longer than
the ordinary fine-grained algorithms. 

Due to the transitivity of trace inclusion, the proof proceeds by
showing that the concrete implementation that executes using relaxed
memory semantics linearizes to its corresponding coarse-grained
abstraction.

We use constant $\mathit{BUFSIZE}$ to bound the maximum size of the
local buffer for each transaction.
\begin{lemma}
  \label{lem:tso-Eq-cga}
  For bounds $N = 2$, $\mathit{SIZE} = 2$, $\mathit{BUFSIZE} = 2$ and
  $V = \{0,1\}$, \tml under TSO is equivalent to \tmlcga and \norec
  under TSO is equivalent to \noreccga.
\end{lemma}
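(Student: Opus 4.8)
The plan is to discharge this lemma by model checking in PAT, exactly as for \reflem{lem:tml-eq-cga} and \reflem{lem:norec-eq-cga}, but over TSO-instrumented versions of the two implementations, following approach~(2) of the section: treat each relaxed-memory implementation as just another implementation of the corresponding CGA. The first step is to build a faithful PAT encoding of x86-TSO. Each transaction $t$ is equipped with a per-thread FIFO store buffer of capacity $\mathit{BUFSIZE}$; a write $\vmem(a)\gets v$ enqueues $(a,v)$ into this buffer rather than updating shared memory; a read of $a$ returns the most recent buffered value for $a$ if one exists, and otherwise reads shared memory; and a separate $\mathtt{Flusher}()$ process, composed in parallel with the transactions, nondeterministically dequeues the head of some buffer into shared memory (as shown in the section for $\mathtt{TML}()$). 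The $\mathit{cas}$ operations in both algorithms must be modelled as on real hardware: an x86 locked read-modify-write drains the issuing thread's store buffer and executes atomically, so $\mathit{cas}(\&\vglb,\ldots)$ becomes a single atomic event guarded by ``buffer of $t$ is empty'' (equivalently, preceded by a forced flush of $t$'s buffer). This is precisely the point on which the ``no additional fences'' claim rests, so it must be got right.

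Second, I would retain the canonical-automaton structure of the fine-grained encodings: each operation fires an observable $inv$ event, then performs its (now TSO-relaxed) body, then fires the matching $ret$ event, with the shared effect occurring in between. As argued for \reflem{lem:tml-eq-cga}, this keeps linearizability reducible to trace refinement, so that trace equivalence against the CGA yields the stated equivalence. The abstract side is unchanged: \tmlcga and \noreccga are sequentially consistent, each operation a single atomic step.

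Third, the equivalences are established by running, for each of \tml and \norec and each listed bound, the two assertions \texttt{\#assert Impl-TSO refines CGA} and \texttt{\#assert CGA refines Impl-TSO}; both passing gives trace equivalence. By transitivity of trace inclusion, the ``Impl-TSO refines CGA'' direction together with \refthm{thm:opa-run}, \refthm{thm:tml-opaque}, and the opacity of \noreccga proved earlier, yields opacity of the TSO implementations; the reverse direction is the (bonus) completeness observation.

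The main obstacle is state-space blow-up: the per-thread buffers multiply reachable states, and PAT must still explore all interleavings of the transactions with $\mathtt{Flusher}$, which is why the chosen bounds ($N=2$, $\mathit{SIZE}=2$, $\mathit{BUFSIZE}=2$, $V=\{0,1\}$) are smaller than in the SC case and the checks run much longer. A secondary, semantic concern is that finite buffers can block a write whose buffer is full, so one must confirm that $\mathit{BUFSIZE}=2$ is large enough not to artificially prune behaviours relevant to the refinement; in practice it suffices because no transaction of \tml or \norec has more than a bounded number of outstanding un-flushed writes before a fence-like $\mathit{cas}$ (or, in \norec, before a read that triggers $\Validate_t$), but this must be checked rather than assumed.
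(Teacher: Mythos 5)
Your proposal follows essentially the same route as the paper: the lemma is discharged by model checking in PAT, encoding per-transaction store buffers with a nondeterministic $\mathtt{Flusher}()$ process composed in parallel, and checking trace refinement in both directions against the unchanged CGA at the stated bounds. Your additional remarks on modelling $\mathit{cas}$ as buffer-draining and on the adequacy of $\mathit{BUFSIZE}$ are sensible elaborations of details the paper leaves implicit, but they do not change the approach.
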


\section{TM design}
\label{sec:tm-design}

Our method of coarse-grained abstraction can potentially contribute to
TM development and allows differences in design to be distinguished at
a higher level of abstraction. Examining the coarse-grained NORec
algorithm, we see that we can avoid validation in the \aread\ and
\tmread\ operations if we have already previously read the value and
stored it in the read set --- here we can simply return the value from
the read set. This variant in CGA is shown in
Listing~\ref{alg:norec2-cga}. Under the assumption that validation is
a more expensive operation than checking membership of the read set
this adds an additional fast path for repeated reads, potentially
increasing performance. In this section we show that we were able to
quickly verify this variant of the algorithm. Furthermore, we
implemented this variant of NORec in the STAMP benchmark
suite~\cite{stamp} to compare its performance with the standard NORec
algorithm.

\begin{algorithm}[t]
   \caption{\ronorec: Coarse-grained \aread operation and
     implementation \tmread}
   \label{alg:norec2-cga}

\begin{algorithmic}[1]



  \Procedure{$\aread_t$}{$a$}
  \Atomic
  \If {$a \in \dom(\vwrset{t})$} \Return $\vwrset{t}(a)$
  \ElsIf {$a \in \dom(\vrdset{t})$} \Return $\vrdset{t}(a)$
  \ElsIf {$\vrdset{t} \subseteq \vmem$} \Return $\vmem(a)$
  \Else\ \textbf{abort}
  \EndIf
  \EndAtomic
  \EndProcedure
\end{algorithmic}\medskip

\begin{algorithmic}[1]
    \Procedure{$\tmread_t$}{$a$}
    \If {$a \in \dom(\vwrset{t})$} \Return $\vwrset{t}(a)$
    \EndIf

    \If {$a \in \dom(\vrdset{t})$} \Return $\vrdset{t}(a)$
    \EndIf

    \State $v_t$ $\gets$ $\vmem(a)$
    \While {$\vloc{t}$ $\neq$ $\vglb$}
    \State $\vloc{t}$ $\gets$ $\Validate_t$
    \State $v_t$ $\gets$ $\vmem(a)$
    \State $\vrdset{t}$ $\gets$ $\vrdset{t} \oplus \{a \mapsto v_t\}$
    \State \Return $v_t$
    \EndWhile
    \EndProcedure
\end{algorithmic}
\end{algorithm}

\begin{theorem}
  \ronoreccga is opaque.
\end{theorem}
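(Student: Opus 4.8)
The plan is to follow exactly the same recipe used for \tmlcga and \noreccga: exhibit a forward simulation from \ronoreccga to the abstract specification \tmsv, and invoke the fact (established earlier) that \tmsv is equivalent to TMS2, hence any refinement of \tmsv is opaque. Since \ronoreccga differs from \noreccga only in the \aread\ operation --- it adds a new branch that returns $\vrdset{t}(a)$ when $a \in \dom(\vrdset{t})$ --- the bulk of the simulation relation and the step-correspondence for \awrite\ and \acommit\ can be reused verbatim from the \noreccga proof. So the real work is confined to the \aread\ transition.

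First I would reuse the step-correspondence function $\mathit{sc}$ from the \noreccga proof, with the only modification being the case for $\aread_t(a)$: when $a \in \dom(\vwrset{t})$ or $rdSet_t \subseteq mem$ we map to $\doRead{t}{a, N}$ as before (where $N$ counts committed transactions, maintained as an auxiliary variable), but we now also need to handle the new branch where $a \in \dom(\vrdset{t})$ but $a \notin \dom(\vwrset{t})$. The key observation is that in \tmsv's $\doRead{t}{a,n}$ action, the {\bf else} branch is enabled whenever $\vidx{t}{n}$ holds, i.e. $\mathit{beginIdx}_t \le n \le \mathit{maxIdx}$ and $\vrdset{t} \subseteq \vmems(n)$; and crucially it re-reads $\vmems(n)(a)$ rather than trusting the cached value. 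So in the new fast-path case I would still map to $\doRead{t}{a, N}$, and the proof obligation becomes: the value $\vrdset{t}(a)$ returned by \ronoreccga's fast path equals $\vmems(N)(a)$ in the abstract state. This follows from the invariant, carried over from the \noreccga proof, that the concrete read set equals the abstract read set for each transaction, together with the invariant $\vrdset{t} \subseteq \latestMem$ for in-flight transactions --- which must itself be re-established, since \ronoreccga, unlike \noreccga, can add to $mem$ (via commits of other transactions) without the reading transaction re-validating. Here is where I expect the main obstacle: I must check that whenever $a \in \dom(\vrdset{t})$, the value stored there is still consistent with the current memory, i.e. that a stale read cannot be returned. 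In \noreccga this is forced because every \aread\ validates $\vrdset{t} \subseteq \vmem$; in \ronoreccga the fast path skips this, so I need the invariant that once $a$ is in $\vrdset{t}$, no committed writer has changed $mem(a)$ while leaving $t$ in-flight --- equivalently, that $\vrdset{t} \subseteq mem$ is preserved as an invariant of any in-flight transaction in \ronoreccga. This is true because in \ronoreccga the only actions that modify $mem$ are writer commits, and such a commit has precondition $\vrdset{t'} \subseteq \vmem$ for the committing transaction, but that does not immediately protect a different transaction $t$; so the argument must instead be that a committing writer's write-back, composed with $t$'s existing read set, cannot produce an inconsistency without $t$'s own prior validation having already failed --- this requires care, and is essentially the soundness heart of the original NORec design for repeated reads.

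Once that invariant is in hand, the remaining cases are routine: initialisation uses the same initial-state correspondence as \noreccga; external actions ($inv$/$resp$) correspond trivially; the \awrite\ and \acommit\ steps are handled exactly as in the \noreccga proof; and the new \aread\ branch maps to $\doRead{t}{a,N}$ with the value-consistency argument above, while the abort branch maps to \abortResp{t}\ as usual. Finally I would invoke \refthm{thm:opa-run}-style soundness: \ronoreccga refines \tmsv, \tmsv is equivalent to (hence refines) TMS2, TMS2 is opaque, therefore \ronoreccga is opaque. As with the other proofs in the paper, the full simulation relation and the invariant proofs would be discharged mechanically in Isabelle, with the PAT model checker separately used (in the companion lemma) to verify that the fine-grained \ronorec\ implementation linearizes to \ronoreccga.
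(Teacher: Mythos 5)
There is a genuine gap at exactly the point you flag as ``the main obstacle,'' and your proposed way around it does not work. You map the new fast-path read to $\doRead{t}{a,N}$ where $N$ is the \emph{current} number of commits (i.e.\ the latest snapshot index), which forces the obligation $\vidx{t}{N}$, i.e.\ $\vrdset{t} \subseteq \latestMem$, and hence the invariant that every in-flight transaction's read set stays consistent with current memory. That invariant is simply false for \ronoreccga: a writer may commit a conflicting update while $t$ is in-flight (the writer's commit only validates the \emph{writer's} read set, not $t$'s), after which $t$'s fast-path read deliberately returns the stale cached value without re-validating. History $h_4$ in Section 7 is precisely such an execution --- it is allowed by \ronoreccga and is opaque, yet it violates $\vrdset{t} \subseteq \vmem$. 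So your fallback argument (``a committing writer's write-back \ldots cannot produce an inconsistency without $t$'s own prior validation having already failed'') cannot be made to go through; the algorithm's whole point is that such an inconsistency with the \emph{latest} memory is tolerated, and opacity is recovered by serialising $t$ against an \emph{older} snapshot.

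The paper's proof resolves this differently: it keeps the \noreccga simulation against \tmsv but adds an auxiliary variable at the concrete level recording the number of \emph{validating} reads, so that the index supplied to $\doRead{t}{a,n}$ for a fast-path read is the snapshot in $\vmems$ current at $t$'s last validation, not $\mathit{maxIdx}$. With that choice, $\vrdset{t} \subseteq \vmems(n)$ holds by construction (the read set was validated against exactly that memory and has only grown with values from it), and the returned value $\vrdset{t}(a)$ automatically agrees with $\vmems(n)(a)$; correspondingly, a read-only transaction's commit serialises at its last validating read rather than its last read. Your proposal would be repaired by making the same move --- choosing the older valid index and maintaining it via an auxiliary variable --- but as written, the chosen abstract index and the invariant supporting it are wrong, so the external step correspondence for the fast-path read cannot be discharged.
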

\begin{proof}
  The proof for opacity of \ronoreccga is largely the same as for
  \noreccga. The only significant difference is that we need an
  additional auxiliary variable at the concrete level to keep track of
  the number of validating reads that have occurred in the CGA. This
  is due to how for the ordinary \noreccga, the commit of a read-only
  transaction can be re-ordered to it's last read, while for
  \ronoreccga it can be reordered to the last validating read. For
  details we refer the reader to our Isabelle
  implementation~\cite{Web}.
\end{proof}

\begin{lemma}
  For bounds $N=2$, $\mathit{SIZE} = 2$ and $V= \{0,1\}$, \ronorec is
  linearizable with respect to \ronoreccga.
\end{lemma}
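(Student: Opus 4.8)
The plan is to discharge this lemma by model checking in PAT, following the same recipe used for \reflem{lem:tml-eq-cga}, \reflem{lem:norec-eq-cga} and \reflem{lem:tso-Eq-cga}. First I would encode \ronoreccga directly from \reflst{alg:norec2-cga}: \aread becomes a single \texttt{ifa}/\texttt{tau}-bracketed atomic step that performs, in order, the write-set lookup, the read-set lookup, and the validate-then-read-memory branch, placed between the observable events $inv.{\tmread}.t$ and $ret.{\tmread}.t$; the \abegin, \awrite and \acommit operations are taken verbatim from \noreccga (\reflst{alg:norec-cga}), since \ronoreccga differs from \noreccga only in \aread. As in \reffig{fig:pat}, the main action of every operation sits as one atomic step between its invocation and response events, so the resulting process is a canonical automaton in the sense of \cite{Lynch96} and is therefore guaranteed linearizable with respect to the fully atomic \ronoreccga --- the same justification already invoked for TML and \norec.

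Second I would encode the fine-grained \ronorec: \tmbegin, \tmwrite, \tmcommit and the \Validate subroutine are exactly those of \norec (\reflst{alg:norec}), and the only new code is the \tmread of \reflst{alg:norec2-cga}, which inserts the read-set fast path before the usual memory-read/validation loop. Every statement that touches $\vglb$ or $\vmem$ (the do/until of \tmbegin, the $\mathit{cas}$ in \tmcommit, the re-reads inside \Validate, and so on) is modelled as its own atomic \texttt{tau} event, so that interleavings at the granularity of individual shared accesses are explored. The overall system is the interleaving $\mathtt{|||}\, t{:}0..N{-}1\, @\, \mathtt{Transaction}(t)$, where each $\mathtt{Transaction}(t)$ fires $\abegin(t)$ and then an arbitrary finite sequence of reads and writes over the $\mathit{SIZE}$ addresses with values drawn from $V$, terminating in a commit or an abort. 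Linearizability is then checked automatically by the refinement assertion \texttt{\#assert NORec2() refines NORec2-CGA();} for the stated bounds $N=2$, $\mathit{SIZE}=2$, $V=\{0,1\}$; checking the converse assertion as well would upgrade the result to equivalence, as for the earlier algorithms.

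Since the mathematical content of the lemma is just the model-checking run, there is no deep proof obligation; the two things to get right are fidelity of the encoding and tractability of the state space. The delicate point in the encoding is the new read-set fast path: in \ronorec a repeated read must return the cached value \emph{without} re-validating and without touching $\vglb$ or re-reading $\vmem$, mirroring the second branch of \aread in \reflst{alg:norec2-cga}; modelling it as anything other than a pure read-set lookup would either break the refinement or mask a genuine error. The practical obstacle is state-space size --- even without the TSO buffers of \reflem{lem:tso-Eq-cga}, the read and write sets over $\mathit{SIZE}$ addresses grow quickly, which is why the bounds are kept small; within these bounds PAT reports the refinement with no counterexample, establishing the lemma (see \cite{Web}).
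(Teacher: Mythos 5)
Your proposal is correct and takes essentially the same route as the paper: the lemma is discharged by encoding the fine-grained \ronorec and the canonical (single-atomic-step) \ronoreccga in PAT and automatically checking the refinement assertion for the stated bounds, exactly as done for \tmlcga and \noreccga. Your observation that the lemma claims only one direction (linearizability rather than equivalence) is also consistent with the paper's statement.
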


We now show that the three coarse-grained abstractions we have
considered in this paper (\reffig{fig:overview}) are pairwise
distinct.
\begin{lemma}
  \label{lem:distinct}
  \tmlcga, \noreccga, \ronoreccga are pairwise distinct.
\end{lemma}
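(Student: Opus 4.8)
The plan is to prove the stronger statement that the three abstractions are pairwise \emph{incomparable} under trace refinement: for each pair I will exhibit a history generated by one of the two abstractions but not by the other, and a history generated by the other but not the first; distinctness follows \emph{a fortiori}. The convenient point is that any history used as a witness is, by construction, a history of one of the two coarse-grained abstractions in question, and since each of \tmlcga, \noreccga and \ronoreccga is opaque (\refthm{thm:tml-opaque} and the two analogous theorems above), every such witness is automatically opaque --- we never have to argue opacity of a witness directly, only decide which abstraction(s) can produce it. Moreover, every operation of each abstraction executes atomically and acts deterministically on the shared state, so once we fix the event sequence of a candidate history the interleaving of the atomic steps and all responses are pinned down, and deciding whether an abstraction produces a given finite event sequence reduces to a single deterministic replay against the operation bodies.

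The witnesses are driven by the differing conditions under which a $\tmread$ aborts: in \tmlcga a read by a transaction that has not itself written aborts exactly when some other transaction has written to memory since it began, regardless of which addresses were touched (because $\vglb$ has moved); in \noreccga a read aborts exactly when some address already recorded in $\vrdset{t}$ now holds a different value in $\vmem$, and any read that does not abort has a unique possible response; in \ronoreccga a read of an address in $\dom(\vrdset{t})$ always returns the cached value, never validating and never aborting. Using the $\bf$-notation of \refsec{sec:histories}, take:
\begin{itemize}
\item $h_1 = \bftmbegin_1 \cdot \bftmbegin_2 \cdot \bftmwrite_2(y,1) \cdot \bftmcommit_2 \cdot \lseq \tmread_1(x), \ret{\tmabort}_1 \rseq$, produced by \tmlcga (transaction $1$ aborts because $\vglb$ moved, although it read nothing that $2$ wrote) but by neither \noreccga nor \ronoreccga (there the only possible response to $\tmread_1(x)$ is $\ret{\tmread}_1(0)$);
\item $h_2 = \bftmbegin_1 \cdot \bftmbegin_2 \cdot \bftmwrite_1(x,1) \cdot \lseq \tmread_2(y), \ret{\tmread}_2(0) \rseq \cdot \bftmcommit_1$, produced by \noreccga and \ronoreccga ($1$'s write is buffered, so $2$ reads the initial value of the unrelated $y$) but not by \tmlcga (after $1$'s write $\vglb$ is odd, so $2$'s read must abort);
\item $h_3 = \bftmbegin_1 \cdot \bftmread_1(x,0) \cdot \bftmbegin_2 \cdot \bftmwrite_2(x,1) \cdot \bftmcommit_2 \cdot \lseq \tmread_1(x), \ret{\tmread}_1(0) \rseq$, produced by \ronoreccga via its fast path ($x \in \dom(\vrdset{1})$) but by neither \noreccga ($\vrdset{1} \not\subseteq \vmem$ forces an abort) nor \tmlcga ($\vglb$ moved, forcing an abort);
\item $h_4 = \bftmbegin_1 \cdot \bftmread_1(x,0) \cdot \bftmbegin_2 \cdot \bftmwrite_2(x,1) \cdot \bftmcommit_2 \cdot \lseq \tmread_1(x), \ret{\tmabort}_1 \rseq$, produced by \noreccga (and by \tmlcga) but not by \ronoreccga (its fast path makes $\ret{\tmread}_1(0)$ the only possible response).
\end{itemize}
Then $\{h_1,h_2\}$ separates \tmlcga from \noreccga, $\{h_1,h_3\}$ separates \tmlcga from \ronoreccga, and $\{h_3,h_4\}$ separates \noreccga from \ronoreccga. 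All four histories use only two transactions, two addresses and values in $\{0,1\}$, so they fall within the bounds at which these abstractions have been model checked.

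The main obstacle is the non-membership half of each claim --- showing that an abstraction \emph{cannot} produce a given event sequence, which in principle requires ruling out every interleaving rather than just the obvious one. For these coarse-grained algorithms this collapses, as noted, to a finite deterministic replay, so each non-membership can be checked by inspection; equivalently, and in keeping with the rest of the paper, the lemma can be discharged in PAT by issuing the six refinement queries ``$X$ refines $Y$'' for the three ordered pairs in both directions, each of which returns \textsf{false} together with exactly such a counterexample trace.
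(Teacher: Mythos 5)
Your proposal is correct and takes essentially the same approach as the paper: for each pair of coarse-grained abstractions it exhibits concrete separating histories that one abstraction produces and the other cannot (the paper obtains these witnesses automatically as PAT counterexamples, and your $h_3$, $h_4$ mirror its $h_4$, $h_3$ for \noreccga versus \ronoreccga). Your hand-built \tmlcga-versus-\noreccga witnesses differ in detail (read-abort behaviour rather than the paper's write-abort and begin-blocking histories), but the deterministic replay checks are sound and the argument goes through.
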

\begin{proof}
  We automatically generate counter-examples using the PAT model
  checker. First we show \tmlcga is distinct from both \noreccga and
  \ronoreccga. History $h_1$ below is allowed by \tmlcga but not
  allowed by \noreccga or \ronoreccga.  \tmlcga only allows one writer
  at any time, and hence, transaction 1 aborts. However, in \noreccga,
  a write operation never aborts because they are cached in a
  transaction-local read set.
  \begin{align*}
    h_1 = {}
    &
      \begin{array}[t]{@{}l@{}}
        \bftmbegin_0 \cdot \lseq \tmwrite_0(x,0) \rseq \cdot
        \bftmbegin_1 \cdot  {} \\
        \lseq
        \tmwrite_1(x,0),\tmabort_1\rseq
      \end{array}      
  \end{align*}
    History $h_2$ below is valid for both \noreccga and \ronoreccga but  invalid for
    \tmlcga, as a \tmlcga transaction must wait for the writing
    transaction to commit before it can begin.
  \begin{align*}
    h_2 = {}
    &
      \begin{array}[t]{@{}l@{}}
        \bftmbegin_0 \cdot \bftmwrite_0(x,0) \cdot
        \bftmbegin_1
      \end{array}
  \end{align*}
    Thus \noreccga and \tmlcga are distinct. Histories $h_3$ and
    $h_4$ below demonstrate that the \noreccga and \ronoreccga are
    distinct.
  \begin{align*}
    h_3 = {}
    &
      \begin{array}[t]{@{}l@{}}
        \bftmbegin_0 \cdot \bftmwrite_0(x,1) \cdot \lseq \tmcommit_0
        \rseq \cdot \bftmbegin_1 \cdot {} \\
        \bftmread_1(x,0) \cdot \lseq \tmread_1(x),\tmabort_1\rseq
      \end{array}
    \\
    h_4 = {}
    &
      \begin{array}[t]{@{}l@{}}
        \bftmbegin_0 \cdot \bftmwrite_0(x,1) \cdot \lseq \tmcommit_0
        \rseq \cdot \bftmbegin_1 \cdot {}\\
        \lseq \tmread_1(x),
        \ret{\tmcommit}_0,\ret{\tmread}_1(0) \rseq \cdot \bftmread_1(x,0)
      \end{array}
  \end{align*}

    In $h_3$, transaction $1$ reads $x$ and then reads $x$ again,
    aborting the second read because another writer (transaction $0$)
    has committed. This history is allowed by \noreccga but not
    \ronoreccga. History $h_4$ is similar to $h_3$, but the second read
    by transaction $0$ succeeds because it reads from local read set
    rather than main memory (which allows it to bypass
    validation). History $h_4$ is allowed by \ronoreccga, but not
    \noreccga. In general, \ronoreccga aborts strictly less often than
    \noreccga as it contains fewer code paths that lead to an abort occurring.
\end{proof}

Finally, we conclude that the three algorithms we have considered are
pairwise distinct, i.e., that the three algorithms considered indeed
implement different TM designs.

\begin{lemma}
  \tml, \norec and \ronorec are pairwise distinct.
\end{lemma}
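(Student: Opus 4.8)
The plan is to lift the distinctness of the three coarse-grained abstractions (\reflem{lem:distinct}) to the fine-grained algorithms, exploiting the fact --- established by the PAT model checker and recorded in \reffig{fig:overview} --- that each implementation is \emph{equivalent} to its coarse-grained counterpart: \tml is equivalent to \tmlcga (\reflem{lem:tml-eq-cga}), \norec is equivalent to \noreccga (\reflem{lem:norec-eq-cga}), and \ronorec is equivalent to \ronoreccga (refinement in both directions; one direction is the linearizability result stated above, the other is the refinement $\ronoreccga \sqsubseteq \ronorec$). Since equivalence here means the two systems produce exactly the same set of observable histories, two implementations produce the same history set if and only if their coarse-grained abstractions do. Hence it suffices to transfer the separating witnesses from the proof of \reflem{lem:distinct}.

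Concretely, I would reuse the four histories $h_1$, $h_2$, $h_3$, $h_4$ from that proof. Each $h_i$ is an alternating history built entirely from completed operation calls, so ``$h_i$ is allowed by $X$-CGA'' is literally ``$h_i$ is a trace of $X$-CGA'', and by the relevant equivalence this is the same as ``$h_i$ is a trace of $X$''. Therefore $h_1$ is a behaviour of \tml but of neither \norec nor \ronorec, while $h_2$ is a behaviour of both \norec and \ronorec but not of \tml; so \tml is distinct from each of the other two. Likewise $h_3$ is a behaviour of \norec but not of \ronorec, and $h_4$ a behaviour of \ronorec but not of \norec, so \norec and \ronorec are distinct. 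Combining these, the three algorithms are pairwise distinct, and --- as with their abstractions --- the witnesses show the difference is two-sided rather than mere inclusion.

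The only genuine thing to verify, and hence the main (minor) obstacle, is bound compatibility: the separating witnesses must lie within the parameter ranges for which the equivalences were model-checked. This holds, since $h_1$--$h_4$ each use at most two transactions, a single address, and values in $\{0,1\}$, whereas \reflem{lem:tml-eq-cga}, \reflem{lem:norec-eq-cga} and the \ronorec equivalence each cover at least $N = 2$, $\mathit{SIZE} = 2$, $V = \{0,1\}$. Since agreement of two implementations in general would force agreement on every bounded instance, exhibiting a separating history within the checked bounds is enough to conclude distinctness of the (unbounded) algorithms. If one wishes to avoid relying on the $\ronoreccga \sqsubseteq \ronorec$ direction, the negative half of each claim (that some $h_i$ is \emph{not} a behaviour of a given implementation) can alternatively be derived from linearizability together with the observation that the $h_i$ are already alternating and complete; but the equivalence formulation is the most direct.
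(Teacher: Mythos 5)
Your proposal is correct and takes essentially the same route as the paper, which likewise derives the result by combining the CGA distinctness witnesses of \reflem{lem:distinct} with the implementation--abstraction equivalences (Lemmas~\ref{lem:tml-eq-cga} and \ref{lem:norec-eq-cga}). Your additional checks --- that the witnesses fit within the model-checked bounds and that the \ronorec case needs equivalence (or linearizability plus completeness of the witness histories) rather than the one-directional lemma as stated --- are sensible elaborations of the paper's brief argument rather than a different method.
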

\begin{proof}
  The proof follows from the equivalence between each algorithm and
  its coarse-grained counterpart (Lemmas~\ref{lem:tml-eq-cga} and
  \ref{lem:norec-eq-cga}) together with \reflem{lem:distinct}.
\end{proof}

\paragraph{Experimental results.}
Having checked opacity for our variant of \norec, we implemented the
algorithm in C using the STAMP benchmarking suite~\cite{stamp} for
transactional memory algorithms, modifying an existing \norec
algorithm by Diegues \emph{et al}~\cite{norecc}. Contrary to our
initial expectations, a naive implementation of \ronorec is actually
slower than ordinary \norec, sometimes up to ten times slower. The
reasons for this are as follows:
\begin{itemize}
\item By avoiding validation for certain reads, we allow writing
  transactions that would otherwise abort to continue running, wherein
  they will invariably (and necessarily) abort in the commit due to
  validation therein. In general, it seems better for a transaction to
  fail fast rather than waste time in a doomed state. Tests using the
  STAMP suite indicate that, for many benchmarks using this variant of
  \norec, there are millions of transactions performing unnecessary
  work before they eventually abort when they commit.
\item In software TM, the difference in performance between validating
  the read set (increasing contention to main memory) compared to
  checking membership of the read set (which is often implemented as a
  list) is not significant enough to warrant the existence of the
  additional fast-path check. 
\end{itemize}
We note however that we have shown opacity for transactions that
return values from their read set whenever possible. Any
implementation of \ronoreccga where we allow re-reading of values from
the read-set only in certain circumstances, e.g., in read-only
transactions, and after the last write in a transaction are still
valid. This avoids the first performance issue above. Such variants
between \norec and \ronorec are clearly opaque. We have benchmarked
both variants, and shown that both perform as well as standard \norec,
yet transactions abort less often, and therefore represent meaningful
(albeit small optimisations) over \norec.

Overall, our method has allowed us to identify an optimisation at a
high level of abstraction and rapidly verify opacity of a variant of
the NORec algorithm. Modifying the opacity proof and model checking
the result took no more than a days worth of work. However, this also
demonstrates the importance of concrete benchmarks to evaluate the
usefulness of such modifications.

\section{Conclusions}

\paragraph{Related work.}
With the widening applicability of TM, there has been an increased
interest in verifying opacity of the underlying algorithms. There are
direct proofs \cite{DDSTW15} as well as proofs by simulation
\cite{Lesani2012} that make use the intermediate specification TMS2
\cite{DGLM13,LLM12}. A detailed comparison with this existing (direct)
simulation-based proof against TMS2 \cite{LLM12} is given in
\refsec{sec:norec-algorithm}. Further comparisons against Derrick
\emph{et al.}'s method \cite{DDSTW15} are give below. Anand \emph{et
  al.} present a proof method for \emph{conflict opacity}, which is a
subclass of opacity \cite{ASS16}, however, these proofs are not
mechanised. Lesani has developed an alternative criteria,
\emph{markability} \cite{LesaniP14,Lesani14} that allows one to prove
opacity by checking that certain conflict ordering properties are
satisfied. The markability technique requires reasoning about two
orders: an {\em effect order} relating transactions, and an {\em
  observation order}, relating writes and reads. Using the technique
involves checking consistency conditions on both these orders, rather
than the simple real-time order of linearizability.  Another
conflict-based technique has also been developed by Guerraoui \emph{et
  al.} \cite{GuerraouiHS10,GHS08,GuerraouiHS11} allowing one to reduce
a proof of opacity to checking opacity of a system with only two
threads and two variables. This reduction depends on the algorithm in
question satisfying a number of properties. 
One point of difference compared with our methods is that Guerraoui
\emph{et al.}'s conflict relations do not check the values within each
address.

\paragraph{Contributions.} Our main contributions for this paper are as
follows.
\begin{enumerate}
\item We have developed a complete method for reducing proofs of
  opacity to the simpler and more widely studied condition
  linearizability. Soundness is proved via an existing result by
  Derrick \emph{et al.} \cite{DDSTW15}. These results bring together the
  previously disconnected worlds of linearizability and opacity
  verification, and allows one to reuse the vast literature on
  linearizability verification \cite{DongolD15}, as well as the
  growing literature on opacity verification (to verify the
  coarse-grained abstractions).
\item We have demonstrated our technique using the TML algorithm and shown
  that the method extends to more complex algorithms by verifying the
  NORec algorithm. As part of this verification, we discover a
  variation of the TMS2 specification, \tmsv that does not require
  validation when read-only transactions commit. We show that TMS2 is
  equivalent to \tmsv.
\item We have shown that the method naturally copes with relaxed memory by
  verifying both TML and NORec are opaque under TSO, and neither
  requires introduction of additional fence instructions. These
  examples show that a single coarse-grained abstraction is valid for
  more than one implementation.
\item We developed a variation of NORec which allows reads to be
  re-read from the read set, and demonstrated that this variation
  aborts less often than the existing NORec algorithm. Consideration
  of opacity at higher levels of abstraction elucidates design
  elements that are not immediately apparent at the level of concrete
  code. We were able to quickly verify and test this modified
  algorithm. Through benchmarking, we developed variations with
  meaningful optimisation, that demonstrate improvements to the
  original implementation in specific circumstances.
\end{enumerate}
Verifying opacity directly is difficult --- because a read-only
transaction may serialise against earlier versions of memory, opacity
requires one to keep track of all stores created by committed writer
transactions. In comparison, linearizability only requires one to keep
track of the latest memory snapshot, which simplifies the proof of an
implementation.  From a verification perspective, a coarse-grained
abstraction is straightforward to construct.

The basic idea of linearizing fine-grained operations to verify
opacity also appears in \cite{DDSTW15}, where the TML algorithm has
been verified. However, their coarse-grained abstraction is very
different from ours. Namely, the operations only modify memory and
does not utilise any synchronisation code. As a result, their proof
must couple an abstraction with the original TML implementation and
introduce an explicit history variable. The proof proceeds via
induction on the histories generated by TML --- these show that each
step of TML preserves opacity. However, it is unclear if such an
inductive method could scale to more complex algorithms such as NORec,
or to include relaxed memory algorithms.

In contrast, our proofs achieve a clear separation of concerns between
opacity (of a coarse-grained abstraction) and linearizability (of an
implementation), i.e., opacity of a coarse-grained abstraction is
verified independently. This separation enables one to distinguish
design elements of opaque algorithms at a higher level of abstraction,
separating design aspects from synchronisation elements in an
implementation that ensure atomicity of fine-grained transactional
operations. These insights have been used to develop a new variation
of NORec with fast-path read-only transactions.

\paragraph{Experiences.} Our experiences suggest that our techniques
do indeed simplify proofs of opacity (and their mechanisation). We
evidence this by verifying several algorithms with relatively little
effort. Our completeness result ensures that the technique is always
applicable.

Opacity of each coarse-grained abstraction is generally trivial to
verify (our proofs are mechanised in Isabelle), leaving one with a
proof of linearizability of an implementation against this
abstraction. The second step is limited only by techniques for
verifying linearizability. We have opted for a model checking approach
using PAT, which enables linearizability to be checked by
automatically checking refinement. These show that the coarse-grained
abstractions we have developed are actually equivalent to their
implementations up to some bound on the number of transactions and
size of the store. It is of course also possible to perform a full
verification of linearizability.

Overall, PAT was well suited for the task of model checking
linearizability. We did not need to provide PAT with any information
on the linearization points, invariants, or simulation relations
between the concrete implementations and CGA abstractions. Once we had
encoded our TM algorithms in PAT, verification was fully
automatic. The task of implementing the TM algorithms in PAT was made
particularly simple due to its support for CSP style processes
combined with shared mutable state. Other CSP style model checkers
such as FDR3~\cite{fdr3} have a more functional specification language
disallowing such mutable state. To implement in our algorithms in such
a system would require encoding the shared state as message passing
between processes, which would significantly increase the distance
between the model and the natural pseudocode implementations of the
algorithms, which in PAT, is fairly minimal. We note that while PAT
did not appear to effectively take advantage of the 24 cores on the
machine we ran our model-checking on, our verifications were limited
by memory usage rather than processor speed. For example, model
checking TML with four transactions used over 40GB of memory. It is
not clear whether this high memory usage could be reduced with another
model checker, or if it is inherent in the complexity of the models
themselves.

\paragraph{Future work.} Our work suggests that to fully verify a TM
algorithm using coarse-grained abstraction, the bottleneck to
verification is the proof of linearizability itself
\cite{DongolD15}. It is therefore worthwhile considering whether
linearizability proofs can be streamlined for transactional
objects. For example, Bouajjani \emph{et al.} have shown that for
particular inductively-defined data structures, linearizability can be
reduced to state reachability \cite{BouajjaniEEH15}. Exploration of
whether such methods apply to transactional objects remains a topic
for future work. Establishing this link would be a useful result ---
it would allow one to further reduce a proof of opacity to a proof of
state reachability.

Verifying linearizability sometimes requires information about the
possible (future) executions of an algorithm
\cite{DongolD15,HeWi90,SchellhornDW14}. The examples we have
considered have not required consideration of (speculative)
future-based behaviour, but such TM algorithms do exist, e.g., TL2
\cite{DBLP:conf/wdag/DiceSS06}. However, for such algorithms, a direct
proof of opacity also requires one to use the same futuristic
information~\cite{Lesani2012}. Development of coarse-grained
abstractions for algorithms such as TL2 remains a topic of future work
--- it is worth noting that our completeness result ensures that this
can be done.

This work can also contribute to the \emph{characterisation} of
algorithms at a high-level of abstraction. That is, by verifying a
number of other algorithms from the literature, it will be possible to
extend the tree in \reffig{fig:overview}, and understand particular
design features at a coarse-grained level. 





We thank John Derrick for helpful discussions and funding from EPSRC
EP/N016661/1.

\bibliographystyle{plain}

\bibliography{references}

\end{document}